\numberwithin{equation}{section}
 \theoremstyle{plain}            
 \newtheorem{theorem}{Theorem}[section]
 \newtheorem{proposition}[theorem]{Proposition}
 \newtheorem{lemma}[theorem]{Lemma}
 \theoremstyle{definition}       
 \newtheorem{remark}[theorem]{Remark}
\begin{document}
\title[A Product Formula Related to Quantum Zeno Dynamics]
{Note on a Product Formula Related \\[.2em] to Quantum Zeno Dynamics}
\author[P.\ Exner]{Pavel Exner}
\address{Doppler Institute for Mathematical Physics and Applied Mathematics\\
Czech Technical University\\
B\v rehov\'a 7\\ 11519 Prague\\ Czechia\\ and Department of
Theoretical Physics\\ NPI\\ Academy of Sciences\\ 25068 \v{R}e\v{z}
near Prague, Czechia} \email{exner@ujf.cas.cz}
\urladdr{http://gemma.ujf.cas.cz/~exner/}

\author[T.\ Ichinose]{Takashi Ichinose}
\address{Department of Mathematics \\ Faculty of Science \\ Kanazawa University \\ Kanazawa 920-1192 \\ Japan}
\email{ichinose@staff.kanazawa-u.ac.jp}

\maketitle

\begin{center}
\emph{To the memory of our friend Hagen Neidhardt (1950--2019)}
\end{center}

\begin{abstract}
Given a nonnegative self-adjoint operator $H$ acting on a separable Hilbert space and an orthogonal projection $P$ such that $H_P := (H^{1/2}P)^*(H^{1/2}P)$ is densely defined, we prove that $\lim_{n\rightarrow \infty} (P\,\mathrm{e}^{-itH/n}P)^n = \mathrm{e}^{-itH_P}P$ holds in the strong operator topology. We also derive  modifications of this product formula and its extension to the situation when $P$ is replaced by a strongly continuous projection-valued function satisfying $P(0)=P$.

\end{abstract}


\section{Introduction and the main result}
\label{s: intro}

The main aim of this paper is to prove the following product
formul{\ae} for short-time Schr\"odinger unitary groups and
orthogonal projections:

\begin{theorem} \label{th:main}
Let $H$
be a nonnegative self-adjoint operator acting on a separable Hilbert space $\mathcal{H}$ and $P$ an orthogonal projection onto a closed subspace of $\mathcal{H}$. Suppose that $H^{1/2}P$ is densely defined, so that $H_P := (H^{1/2}P)^*(H^{1/2}P)$ is a self-adjoint operator. Then for any $f\in \mathcal{H}$ and $\varepsilon = \pm 1$ the following relations hold,
\begin{align}
& \lim_{n\rightarrow \infty} (P\,\mathrm{e}^{-\varepsilon itH/n}
P)^nf
   = \mathrm{e}^{-\varepsilon itH_P}Pf\,, \label{Pboth} \\
& \lim_{n\rightarrow \infty} (\mathrm{e}^{-\varepsilon itH/n} P)^nf
   = \mathrm{e}^{-\varepsilon itH_P}Pf\,, \label{Pright} \\
& \lim_{n\rightarrow \infty} (P\,\mathrm{e}^{-\varepsilon itH/n})^nf
   = \mathrm{e}^{-\varepsilon itH_P}Pf \label{Pleft}
\end{align}
in the Hilbert space norm, and moreover, the convergence is uniform on every
bounded $t$-interval in $\mathbb{R}$.
\end{theorem}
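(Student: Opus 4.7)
The plan is to view the three relations in Theorem \ref{th:main} as parallel instances of a Chernoff-type product formula for unitary families that begin at an orthogonal projection rather than at the identity. I would introduce
\[
F_1(t)=Pe^{-\varepsilon itH}P,\qquad F_2(t)=e^{-\varepsilon itH}P,\qquad F_3(t)=Pe^{-\varepsilon itH},
\]
each of which is a strongly continuous contraction-valued function. The aim is to identify the common ``quadratic-form generator'' of the $F_j$ at $t=0$ with $-i\varepsilon H_P$ on the form domain $\mathfrak d:=\{f\in\mathcal H:Pf\in D(H^{1/2})\}$, which coincides with $D(H_P^{1/2})$ by construction, and then to invoke a Chernoff-type theorem tailored to families starting from a projection to conclude that $F_j(t/n)^n f \to e^{-i\varepsilon tH_P}Pf$ strongly, uniformly on bounded $t$-intervals.

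The central technical step is the identification of the generator. Using the spectral theorem for $H$ one computes, for every $f\in\mathfrak d$,
\[
\lim_{t\to 0}\tfrac{1}{t}\bigl\langle(P-F_j(t))f,f\bigr\rangle \;=\; i\varepsilon\,\|H^{1/2}Pf\|^{2},
\]
which, after writing out the scalar spectral integral, reduces to dominated convergence based on the elementary bound $|(1-e^{-is\lambda})/s|\le\lambda$ for $s>0$, $\lambda\ge 0$. All three families give the same limit because the outer $P$'s are absorbed either by $Pf$ on the right or by passing to an adjoint on the left. Polarisation yields the analogous statement for sesquilinear forms, and this encodes strong resolvent convergence of the bounded self-adjoint operators $\varepsilon i(P-F_j(t))/t$, restricted to $P\mathcal H$, towards $H_P$ as $t\to 0$.

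The Chernoff-type statement for projections then yields the three formul{\ae}: a strongly continuous contraction-valued family $F(t)$ with $F(0)=P$ whose form generator on $D(H_P^{1/2})$ is $-i\varepsilon H_P$ automatically satisfies $F(t/n)^n f \to e^{-i\varepsilon tH_P}Pf$ strongly, with uniformity on bounded $t$-intervals inherited from the contractivity of the factors. The case $\varepsilon=-1$ can be read off from the case $\varepsilon=+1$ by taking adjoints and using the self-adjointness of $H_P$. The main obstacle I anticipate is the non-closedness of $H^{1/2}P$: because $Pf\in D(H^{1/2})$ does not imply $Pf\in D(H)$, the naive expansion $e^{-itH/n}\approx I-itH/n$ is unavailable on $D(H_P)$, and the entire comparison between $F_j(t/n)^n$ and $e^{-i\varepsilon tH_P}P$ has to be carried out at the level of quadratic forms and resolvents, for instance via a telescoping identity of the schematic shape
\[
F_j(t/n)^n - e^{-i\varepsilon tH_P}P \;=\; \sum_{k=0}^{n-1}F_j(t/n)^{k}\bigl(F_j(t/n)-e^{-i\varepsilon tH_P/n}\bigr)e^{-i\varepsilon (n-k-1)tH_P/n}P,
\]
combined with the contractivity bound $\|F_j(t/n)^{k}\|\le 1$ to uniformize in $n$ and a density argument to pass from $\mathfrak d$ to the whole Hilbert space.
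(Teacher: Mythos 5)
Your outline collapses precisely at the point where the whole difficulty of this theorem resides, namely the passage from a quadratic-form identity to the resolvent convergence that Chernoff's theorem actually requires. There is no general ``Chernoff-type theorem for projections'' that turns the form limit $\tfrac{1}{t}\langle(P-F_j(t))f,f\rangle\to i\varepsilon\|H^{1/2}Pf\|^2$ on $D(H_P^{1/2})$ into the conclusion $F_j(t/n)^n f\to e^{-i\varepsilon tH_P}Pf$: the version of Chernoff's theorem one can use needs strong convergence of the resolvents $(I+(I-F_j(\tau))/\tau)^{-1}$, and establishing that convergence is the content of Sections~3 and~4 of the paper. Also note that $(P-F_j(t))/t$ is \emph{not} self-adjoint. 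Writing $\tau^{-1}(I-e^{-i\tau H})=G(\tau)+iH(\tau)$, the operator is accretive; it has a nonnegative real part $G(\tau)$ which behaves well (this is what your spectral computation captures), but its imaginary part $H(\tau)$ is not sign-definite. Kato's scheme works for functions like $(1+\varepsilon ix)^{-1}$, for which the relevant form pieces stay nonnegative, but it fails for $e^{-\varepsilon ix}$ because the positive and negative parts $H^{\pm}(\tau)$ cannot be separately controlled: the paper proves only that the difference of their squared norms stays bounded, and explicitly records that this information is too weak to pin down the weak limit of the resolvent (see the discussion around Proposition~\ref{p:sigmaweak-bdd}). So form-generator convergence, or strong-graph convergence of $\varepsilon i(P-F_j(t))/t$, simply does not ``encode'' strong resolvent convergence here.

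The paper's way around this is structurally different and the gap in your sketch cannot be closed by the telescoping identity either. The paper first quotes its own earlier $L^2_{\text{\rm loc}}(\mathbb R;\mathcal H)$ result \eqref{L2locCov} (obtained in~\cite{EI05} via Vitali's theorem), which gives the resolvent convergence for almost every $t$; it then proves a nontrivial local equicontinuity estimate (Lemma~\ref{l:equiconti}), obtained by factoring the resolvent difference as $T_1T_2T_3$ in \eqref{split-D(t,s;tau)to123} and showing $T_2$ and $T_3$ behave well, and invokes weak compactness of the ball plus Ascoli--Arzel\`a (using separability of $\mathcal H$) to upgrade ``a.e.\ $t$'' to ``all $t$''. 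Your telescoping expansion would require an $o(\tau)$ estimate for $\bigl(F_j(\tau)-e^{-i\varepsilon\tau H_P}\bigr)g$ uniformly along the compact orbit $\{e^{-i\varepsilon s H_P}Pf\}$; as the paper notes, $PHP$ need not be essentially self-adjoint on $D[HP]$ and $D[HP]$ may not even be dense, so there is no convenient core of $H_P$ on which such a comparison can be controlled uniformly. Finally, the adjoint argument for $\varepsilon=-1$ only yields weak, not strong, convergence; the paper instead handles both signs simply by letting $t$ range over all of $\mathbb R$.
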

\noindent Needless to say, the claim is nontrivial only if $H$ and $P$ do not
commute.

The main motivation to study such product formul{\ae} comes from the
behavior of quantum systems exposed to frequent measurements. Turing
was the first to notice \cite{Ho04} that if we ascertain repeatedly
whether a quantum system is in a given state, then in the limit
of infinite measurement frequency it becomes impossible to leave
this state. The idea was rediscovered in the context of unstable
system decays -- see, e.g., \cite{BN67, Fr72} -- but it attracted a
wide attention only after Misra and Sudarshan \cite{MS77} invented a
catchy name calling such a behavior \emph{quantum Zeno effect} in an
allusion to the classical Zeno aporia about a flying arrow. More
about the early history can be found in \cite{Ex85}.

The second breaking moment came in 1990 when Itano et al.
\cite{IHBW90} demonstrated the existence of the effect
experimentally. Since then it became object of an extensive
examination, both from the theoretical and experimental points of
view, and it led even to various practical applications; a partial
summary can be found in the review paper \cite{FP08}.

If the projection $P$ describing the measurement has a dimension
larger than one, the question about the time evolution in the
subspace to which the permanent observation confines the state of
the system becomes nontrivial. It is natural to expect that such a
`Zeno dynamics' will be governed by the part of the original
Hamiltonian $H$ acting in the subspace $P\mathcal{H}$, and it was
shown in \cite[Sect.~2.4]{Ex85} that the generator is indeed
associated with appropriate quadratic form constructed from the
operator $H$ and the projection $P$. It was not easy, however, to
establish the existence of the Zeno dynamics beyond the situations
when the dimension of $P$ is finite or the operator $H$ is bounded;
needless to say that this is often not the case with actual physical
systems. A prime example is a permanent ascertaining whether a free
quantum particle dwells within a prescribed region $\Omega$ of the
configuration space discussed in \cite{FPSS01}, see also
\cite{FP08}, with the conclusion that the Zeno generator is (the
multiple of) the corresponding Dirichlet Laplacian. The argument
made use of the stationary phase method but the existence of the
limit was not actually established by the authors.

Motivated by the said paper we addressed the question of the Zeno
dynamics existence in \cite{EI05}, where we have managed to
establish the existence of the limits of the expressions appearing
on the left-hand sides of \eqref{Pboth}--\eqref{Pleft} in the
topology of a larger space, namely, the Fr\'echet space
$L^2_{\text{\rm loc}}({\mathbb R}; {\mathcal H}) = L^2_{\text{\rm
loc}}({\mathbb R})\otimes {\mathcal H}$, provided that $H$ is
semibounded and the operator $H_P$ is densely defined; the validity
of the formul{\ae} is preserved if the exponential in
\eqref{Pboth}--\eqref{Pleft} is replaced by functions of a wider
class, in particular, by the resolvent $(I+itH)^{-1}\,$
\cite{EINZ07, Ich15}. We argued in \cite{EI05} that such a result
can be regarded as sufficient from the viewpoint of physics due to
the fact that every measurement, in particular, that of time is
burdened with errors, and any actual experiment typically involves
averaging over a large number of system copies.

It is desirable, though, to answer the question without such a
underpinning by demonstrating the result with the convergence in a
stronger sense, namely that of the strong operator topology. This is
the aim of the present paper. In addition to the described physical
motivation, the obtained relation are of independent mathematical
interest belonging to the genre of the product formul{\ae} of
Trotter and Trotter-Kato \cite{Ka78, Tr59}, see also \cite{RS80}. In
fact, we are going to prove a slightly more general claim with a fixed
$P$ replaced by a projection-valued function of~$t$ satisfying
certain regularity assumptions.

\begin{theorem} \label{th:main'}
Let $\mathcal{H}$ be a separable Hilbert space, and $H,\,P$, and thus also $H_P$, be the same as in Theorem~\ref{th:main}. Let further $P(\cdot)$ be a strongly continuous function the values of which are orthogonal projections in $\mathcal{H}$ defined in a right neighbourhood of zero and satisfying $P(0)=P$. Moreover, suppose that
\begin{equation} \label{hypoth-to-H1/2P}
\lim_{\tau\to 0+} [\tau^{-1}(I-\mathrm{e}^{-it\tau H})]^{1/2}P(\tau) v
= \mathrm{e}^{\pi i/4}(tH)^{1/2}Pv\,,
\end{equation}
for every $v \in D[H^{1/2}P]$. Then for any $f \in \mathcal{H}$ and $\varepsilon = \pm 1$ we have
\begin{align}
& \lim_{n\rightarrow \infty} (P(1/n)\,\mathrm{e}^{-\varepsilon
itH/n} P(1/n))^nf
   = \mathrm{e}^{-\varepsilon itH_P}Pf\,, \label{P(t)both} \\
& \lim_{n\rightarrow \infty} (\mathrm{e}^{-\varepsilon itH/n}
P(1/n))^nf
   = \mathrm{e}^{-\varepsilon itH_P}Pf\,, \label{P(t)right} \\
& \lim_{n\rightarrow \infty} (P(1/n)\,\mathrm{e}^{-\varepsilon
itH/n})^nf
   = \mathrm{e}^{-\varepsilon itH_P}Pf\,, \label{P(t)left}
\end{align}
in the Hilbert space norm, where the convergence is uniform on every bounded $t$-interval in $\mathbb{R}$.
\end{theorem}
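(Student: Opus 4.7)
My plan is to adapt the Chernoff/Trotter--Kato-type argument that presumably underlies Theorem~\ref{th:main} to the setting of a varying projection $P(1/n)$; the hypothesis \eqref{hypoth-to-H1/2P} is tailor-made so that the necessary form convergence survives the replacement $P\rightsquigarrow P(1/n)$.

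First I would reduce the three identities \eqref{P(t)both}--\eqref{P(t)left} to a single one. Taking adjoints flips the sign of $\varepsilon$ and exchanges \eqref{P(t)right} with \eqref{P(t)left}, while $P(1/n)^2=P(1/n)$ lets one pass from the two-sided product in \eqref{P(t)both} to the one-sided products in \eqref{P(t)right}, \eqref{P(t)left} up to an extra factor $\mathrm{e}^{-\varepsilon itH/n}$ that converges strongly to the identity, and up to the asymptotic equivalence of $n-1$ and $n$ iterations (controlled by the same form machinery below). Hence it suffices to prove \eqref{P(t)both}, and by symmetry I may take $\varepsilon=+1$.

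Setting $C_n:=P(1/n)\mathrm{e}^{-itH/n}P(1/n)$, these are contractions on $\mathcal{H}$, while the target $\mathrm{e}^{-itH_P}P$ is the \emph{degenerate} unitary semigroup whose formal generator acts as $iH_P$ on $P\mathcal{H}$ and as ``$+\infty$'' on $(I-P)\mathcal{H}$. A Chernoff-type criterion --- the same abstract device presumably underlying Theorem~\ref{th:main} --- reduces the claim $C_n^nf\to\mathrm{e}^{-itH_P}Pf$ to the strong convergence of the sesquilinear form $v\mapsto n\langle(I-C_n)v,v\rangle$ on the form core $D[H^{1/2}P]$, together with divergence on $(I-P)\mathcal{H}$. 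I would verify the former via the decomposition
\[
n\langle(I-C_n)v,v\rangle = n\|(I-P(1/n))v\|^2 + \langle K_nP(1/n)v,\, K_n^*P(1/n)v\rangle,
\]
with $K_n:=[n(I-\mathrm{e}^{-itH/n})]^{1/2}$. Using \eqref{hypoth-to-H1/2P} for both signs of the time variable (the second corresponding to $K_n^*$ by complex conjugation in the spectral calculus) gives $K_nP(1/n)v\to\mathrm{e}^{i\pi/4}(tH)^{1/2}Pv$ and $K_n^*P(1/n)v\to\mathrm{e}^{-i\pi/4}(tH)^{1/2}Pv$, so the second summand tends to $it\langle H_Pv,v\rangle$. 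Subtracting the plain spectral-theorem limit $K_nPv\to\mathrm{e}^{i\pi/4}(tH)^{1/2}Pv$ (valid for $v=Pv\in D(H^{1/2})$) from \eqref{hypoth-to-H1/2P} forces $K_n(P(1/n)-P)v\to 0$; combining this with a spectral lower bound on $|K_n(\lambda)|^2=n|1-\mathrm{e}^{-it\lambda/n}|$ away from $\ker H$, and with a direct strong-continuity argument on $\ker H$, yields $n\|(I-P(1/n))v\|^2\to 0$ for $v\in P\mathcal{H}\cap D[H^{1/2}P]$. For $v\notin P\mathcal{H}$ the first summand trivially diverges, as required.

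The uniform convergence on bounded $t$-intervals then follows from $\|C_n^n\|\le 1$ combined with an equicontinuity argument for $t\mapsto C_n^nf$. I expect the main obstacle to be the vanishing of $n\|(I-P(1/n))v\|^2$ just invoked: \eqref{hypoth-to-H1/2P} only controls $P(\tau)v$ after application of the square-root operator $K_n$, and extracting from this the pointwise decay rate of $(I-P(1/n))v$ requires the delicate spectral coupling alluded to above. This is precisely why \eqref{hypoth-to-H1/2P} is formulated with a square root rather than with the more naive linear expression $\tau^{-1}(I-\mathrm{e}^{-it\tau H})P(\tau)v\to itHPv$.
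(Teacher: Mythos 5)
There is a genuine gap, and it sits exactly where the real difficulty of this theorem lies. Your reduction rests on a ``Chernoff-type criterion'' that would deduce $C_n^nf\to\mathrm{e}^{-itH_P}Pf$ from strong convergence of the sesquilinear forms $v\mapsto n\langle(I-C_n)v,v\rangle$ on the form core $D[H^{1/2}P]$ (plus divergence on $(I-P)\mathcal{H}$). No such criterion is available. Chernoff's theorem requires strong convergence of the \emph{resolvents} $(I+n(I-C_n))^{-1}$ to $(I+itH_P)^{-1}P$, i.e.\ the analogue of \eqref{Chern}, and the passage from convergence of forms on a core to resolvent convergence is precisely what breaks down when the generator is purely imaginary: the Kato-type machinery that makes this passage work for nonnegative self-adjoint semigroups (and for functions such as $(1+\varepsilon ix)^{-1}$) does not apply to $\mathrm{e}^{-\varepsilon ix}$, because in the imaginary part of the relevant quadratic identity one only controls the \emph{difference} $\|(|t|H^+(t\tau))^{1/2}P(\tau)u_\tau(t)\|^2-\|(|t|H^-(t\tau))^{1/2}P(\tau)u_\tau(t)\|^2$, not the two terms separately, so the limit operator cannot be identified this way. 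This is why the paper does not argue through form convergence at all: it starts from the $L^2_{\mathrm{loc}}(\mathbb{R};\mathcal{H})$ convergence established in \cite{EI05} (which yields \eqref{Chern} only for $t$ outside a Lebesgue-null set $M$, using separability and a countable dense set of vectors), and then removes the exceptional set by proving the equicontinuity Lemma~\ref{l:equiconti} for $t\mapsto u_\tau(t)=(I+S(it;\tau))^{-1}f$, uniformly in $\tau\in(0,1]$, and invoking Ascoli--Arzel\`a together with weak compactness and metrizability of balls; this equicontinuity proof (the factorization into $T_1T_2T_3$, the operator $T_0$, strong graph limits, uniform boundedness and a compactness argument for the orbit of $T_3(s;\tau)f$) is the core of the paper and has no counterpart in your outline.

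A second, subsidiary problem: even granting your reduction, the step $n\|(I-P(1/n))v\|^2\to 0$ for $v=Pv\in D[H^{1/2}P]$ is not a consequence of the hypotheses. Strong continuity of $P(\cdot)$ gives $\|(I-P(\tau))v\|\to 0$ with no rate, and \eqref{hypoth-to-H1/2P} controls only $[\tau^{-1}(I-\mathrm{e}^{-it\tau H})]^{1/2}P(\tau)v$; the spectral lower bound you invoke for $|K_n|$ degenerates both near the bottom of the spectrum of $H$ and near $\lambda\approx 2\pi mn/|t|$, so no $o(n^{-1/2})$ decay of $(I-P(1/n))v$ can be extracted there --- your appeal to ``strong continuity on $\ker H$'' yields convergence but not the $n^{1/2}$-rate your form identity needs. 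In short, both the abstract reduction and the key quantitative estimate it requires are unsupported, and the statement cannot be reached along this route without essentially reconstructing the a.e.-convergence-plus-equicontinuity argument of the paper.
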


\begin{remark} \label{}
Note that the hypothesis made in Theorem~\ref{th:main'} about the convergence of $[\tau^{-1}(I- \mathrm{e}^{-\tau H})]^{1/2}P(\tau)v$ is slightly weaker in comparison with \cite[Theorem~2.1]{EI05} where we assumed that $D[H^{1/2}P(\tau)] \supset D[H^{1/2}P]$ and $\lim_{\tau \rightarrow 0+} \|H^{1/2}P(\tau)v\| = \|H^{1/2}Pv\|$ holds for every $v \in D[H^{1/2}P]$. This, in fact, was not fully necessary there as a footnote in \cite[p.~206]{EI05} briefly mentioned.
\end{remark}

Note that the assumption of positivity of $H$ is made for
convenience only, it is obvious that the result remains to be valid
if $H$ is replaced by $H+cI$ with a fixed $c\in\mathbb{R}$, i.e. for
any self-adjoint operator bounded from below. On the other hand, the
density hypothesis is crucial; in \cite[Rem.~2.7]{EI05} we cited an
example showing that in its absence the expressions
$(\mathrm{e}^{-itH/n} P)^n$ may not converge in any sense. It may
also happen that they converge but not strongly. Examples were found
by Matolcsi and Svidkoy \cite{MS03}, however, they do not contradict
Theorem~\ref{th:main'} because in one of them the analogue of $H_P$
is not densely defined and in the other the operator $H$ is not
semibounded.

Let us now describe briefly our strategy to prove Theorem~\ref{th:main'}. The main tool is Chernoff's theorem \cite{Ch74}, see also \cite{Ch68}, which for reader's convenience we reproduce in Sect.~\ref{s: proof} below. It will yield the sought result if we show that the $\tau$-family $\{[I+ \tau^{-1}(I-P(\tau)\,\mathrm{e}^{-\varepsilon it\tau H} P(\tau))]^{-1}\}_{\tau>0}$ converges to $(I+\varepsilon itH_P)^{-1}$ as $\tau\to 0+$ in the strong operator topology. The proof might basically follow the argument used by Kato in \cite{Ka78} to establish his celebrated {\it self-adjoint} Trotter-Kato product formula for the form sum of two nonnegative self-adjoint operators, since these two problems appear to have notable similarities. However, a straightforward analogy of Kato's argument is not sufficient due to a difficulty one encounters, to be specified in Sect.~\ref{s: proof}, Step IV of the proof. The point is that the argument \emph{can} be applied to a certain class of admissible functions $\phi(x)$ which contains beside real exponentials, e.g., $(1+\varepsilon ix)^{-1}$, as shown in the mentioned papers \cite{EINZ07, Ich15}, but unfortunately \emph{this class fails to include} $\mathrm{e}^{-\varepsilon ix}$ corresponding to our unitary group $\mathrm{e}^{-itH}$.

Our way to overcome this obstacle is to start from the weaker result mentioned above. What we did in \cite{EI05} was to complement the modified Kato's argument by the Vitali theorem from complex function theory; in that way we proved that for each $f \in {\mathcal H}$ the $\tau$-family $\{[I+ \tau^{-1}(I-P(\tau)\mathrm{e}^{-\varepsilon it\tau H}P(\tau))]^{-1}f\}_{\tau>0}$ converges in the the Fr\'echet space $L^2_{\text{\rm loc}}({\mathbb R}; {\mathcal H})$. This conclusion serves as a departing point here, although it only implies that for some set $M_f \subset [0,\infty)$ of Lebesgue measure zero, the $\tau$-family $\{[I+ \tau^{-1}(I-P(\tau)\mathrm{e}^{-\varepsilon it\tau H}P(\tau))]^{-1}f\}_{\tau>0}$ converges in the Hilbert space norm for every $t \in [0,\infty) \setminus M_f$, not excluding the possibility that the convergence \emph{does} hold at some points of $M_f$. Furthermore, using the separability hypothesis made about the Hilbert space ${\mathcal H}$, we can choose a countable dense subset ${\mathcal D} := \{f_l\}_{l=1}^{\infty}$ in ${\mathcal H}$. Putting then $M = M_{\mathcal D} := \cup_{l=1}^{\infty} M_{f_l}$, which is also a set of Lebesgue measure zero, we may say that the above indicated $\tau$-family converges for all $t \in {\mathbb R}\setminus M$ and for every $f \in {\mathcal D}$, and therefore, in view of the density, also for every $f \in {\mathcal H}$.

To pass from the `almost all $t$' to the `all $t$' stage, one has to demonstrate that the exceptional set $M$ is in fact empty. This task may seem a small step, but in reality it proved to be a deep and highly nontrivial question. Our way to deal with it is to establish the equicontinuity of the above $\tau$-family, cf.~Lemma~\ref{l:equiconti}, which would allow us to achieve our goal by means of the Ascoli-Arzel\`{a} theorem.

A detailed description of the argument we have sketched here, given
in Sect.~\ref{s: proof} and Sect.~\ref{s: proof-lemma}, is a core part
of the paper. As a preliminary, we characterize in the next section the limit
self-adjoint operator $H_P$ appearing in the theorems. Finally, the
paper will be concluded with a short section in which we return
briefly to example of the permanent position measurement considered
in \cite{FPSS01}.

\section{Concerning the limit self-adjoint operator~$H_P$}
\label{s: limitop}

Throughout the paper $H$ will be a nonnegative self-adjoint operator in a
separable Hilbert space ${\mathcal H}$, and $P$ will be an orthogonal
projection. As we have indicated above, the nonnegativity assumption
is made for convenience; our main result extends easily to any
self-adjoint operator $H$ bounded from below as well as, by sign
change, to one bounded from above, i.e. to each semi-bounded
self-adjoint operator in ${\mathcal H}$.

Our aim here is to elucidate what is the Zeno generator $H_P$ appearing in Theorems~\ref{th:main} and \ref{th:main'} and to indicate some of its properties which we will need in the sequel. Consider the quadratic form $u \mapsto \|H^{1/2}Pu\|^2$ with \emph{form domain} $D[H^{1/2}P]$, being the domain of the \emph{operator} $H^{1/2}P$. While the domains of $HP$ and $H^{1/2}P$ are nontrivial only as subspaces of $P{\mathcal H}$, we consider these operators always as acting in the whole Hilbert space ${\mathcal H}$ writing, if needed, $HP = HP\!\!\restriction\!_{P{\mathcal H}}\oplus\, 0\!\restriction\!_{(I-P){\mathcal H}}$ with the domain
\begin{equation}  \label{HP}
  D[HP] = D[HP\!\!\restriction\!_{P{\mathcal H}}]\oplus (I-P){\mathcal H}, \quad D[HP\!\!\restriction\!_{P{\mathcal H}}] = D[H]\cap P{\mathcal H},
\end{equation}
and similarly for $H^{1/2}P$. Note that the latter is a \emph{closed} linear operator, because $H^{1/2}$ is closed; from the same reason the operator $HP$ is also closed.  We have $(H^{1/2}P)^* \supset PH^{1/2}$ in general. On the other hand, the operator $PH^{1/2}$ may not be closed. This would be the case if and only if there were $c_0,\,c_1 > 0$
such that $\|H^{1/2}u\| \leq c_0\|PH^{1/2}u\| + c_1\|u\|$ holds for all $u$ in the domain $D[H^{1/2}]$ of $PH^{1/2}$, however, the necessary part of this condition may not be satisfied. The same is true for $PH$.

By $H_P$ we denote the unique self-adjoint operator associated with the above mentioned quadratic form. Its (operator) domain $D[H_P]$ is a subspace of $D[H^{1/2}P]$ consisting of all $u \in D[H^{1/2}P]$ which satisfy $|\langle H^{1/2}Pu, H^{1/2}Pv \rangle| \leq C\|v\|$ for all $v \in D[H^{1/2}P]$ with a constant $C \geq 0$, cf.~\cite[Sect.~VI.2.1]{Ka76}, so that
\begin{equation}  \label{H_P}
  H_P = (H^{1/2}P)^*(H^{1/2}P).
\end{equation}
Needless to say, the form domain of $H_P$ is the (operator) domain of $H_P^{1/2}$, for which we have by polar decomposition, cf.~\cite[Sect.~VI.2.7]{Ka76},
\begin{align}
&H_P^{1/2} = |H^{1/2}P| = [(H^{1/2}P)^*(H^{1/2}P)]^{1/2}\,,
                                                     \nonumber\\
&D[H_P^{1/2}] = D[H^{1/2}P]
= (D[H^{1/2}] \cap P{\mathcal H}) \oplus (I-P){\mathcal H}.
                                             \label{domain-H1/2P}
\end{align}
Then it is not difficult to check the following claim:

\begin{proposition} \label{p:H_P}
The operator $H_P$ in \eqref{H_P} decomposes as
\begin{equation} \label{decomp-HP}
H_P = H_P\!\!\restriction\!_{P{\mathcal H}}
                     \oplus\, 0\!\restriction\!_{(I-P){\mathcal H}}
\end{equation}
having the dense domain
 $$
D[H_P] := D[H_P\!\!\restriction\!_{P{\mathcal H}}] \oplus (I-P){\mathcal H}
 \,\subset\, P{\mathcal H} \oplus (I-P){\mathcal H} = {\mathcal H}\,.
$$
Here $H_P\!\!\restriction\!_{P{\mathcal H}}$ is the $P{\mathcal H}$-component of $H_P$, which is the self-adjoint operator in the subspace $P{\mathcal H}\subset\mathcal{H}$ associated with the quadratic form
in $P{\mathcal H}$ with the form domain $D[{H_P}^{1/2}]\cap P{\mathcal H}$,
\begin{equation} \label{quad-H1/2P}
D[{H_P}^{1/2}]\cap P{\mathcal H} = D[H^{1/2}P]\cap P{\mathcal H} \ni w
             \mapsto  \|H^{1/2}Pw\|^2 = \|H_P^{1/2}w\|^2\,,
\end{equation}
and the zero operator $0\!\!\restriction\!\!_{(I-P){\mathcal H}}$ is its $(I\!-\!P){\mathcal H}$-component, trivially bounded and self-adjoint on the subspace $(I\!-\!P){\mathcal H}$ orthogonal to $P\mathcal{H}$.
\end{proposition}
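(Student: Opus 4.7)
My plan is to leverage the explicit decomposition \eqref{domain-H1/2P} of $D[H^{1/2}P]$ to show that the quadratic form defining $H_P$ splits as a direct sum of two forms supported on the orthogonal subspaces $P\mathcal{H}$ and $(I-P)\mathcal{H}$, and then to appeal to the uniqueness part of the first representation theorem to transfer this splitting to the operator $H_P$ itself.

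First I would take an arbitrary $u \in D[H^{1/2}P]$ and, in accordance with \eqref{domain-H1/2P}, write $u = w \oplus w'$ with $w \in D[H^{1/2}] \cap P\mathcal{H}$ and $w' \in (I-P)\mathcal{H}$. Since $Pu = w$, the form value is $\|H^{1/2}Pu\|^2 = \|H^{1/2}w\|^2$, which is insensitive to $w'$. Introducing $h_1$ as the form $w \mapsto \|H^{1/2}w\|^2$ in $P\mathcal{H}$ with form domain $D[H^{1/2}] \cap P\mathcal{H}$, and $h_2 \equiv 0$ in $(I-P)\mathcal{H}$ with form domain $(I-P)\mathcal{H}$, this identifies the form of $H_P$ as the direct sum $h_1 \oplus h_2$.

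Next I would verify that $h_1$ is densely defined and closed in $P\mathcal{H}$. Density is a restatement of the hypothesis that $H^{1/2}P$ is densely defined: through \eqref{domain-H1/2P}, it forces $D[H^{1/2}] \cap P\mathcal{H}$ to be dense in $P\mathcal{H}$. Closedness follows from the closedness of $H^{1/2}$ together with the closedness of the subspace $P\mathcal{H}$: if $\{w_n\} \subset D[h_1]$ is Cauchy in the form norm, then $w_n \to w$ in $P\mathcal{H}$, the sequence $\{H^{1/2}w_n\}$ is Cauchy in $\mathcal{H}$, and one concludes $w \in D[H^{1/2}] \cap P\mathcal{H}$ with $H^{1/2}w_n \to H^{1/2}w$. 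Letting $H_P\!\!\restriction\!_{P\mathcal{H}}$ denote the self-adjoint operator in $P\mathcal{H}$ associated with $h_1$, relation \eqref{quad-H1/2P} holds by construction, while the form $h_2$ trivially corresponds to the zero operator on $(I-P)\mathcal{H}$.

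Finally I would invoke the elementary fact that the self-adjoint operator associated with a direct sum of closed densely defined forms on orthogonal subspaces is the direct sum of the individual associated operators. Uniqueness in the first representation theorem then identifies $H_P$ with $(H_P\!\!\restriction\!_{P\mathcal{H}}) \oplus 0\!\restriction\!_{(I-P)\mathcal{H}}$, yielding \eqref{decomp-HP}, and the asserted form of $D[H_P]$ is immediate. Density of $D[H_P]$ in $\mathcal{H}$ is inherited from the density of $D[H_P\!\!\restriction\!_{P\mathcal{H}}]$ in $P\mathcal{H}$, which is automatic for any self-adjoint operator. No genuine analytic obstacle is encountered here; the only care required is the careful bookkeeping of the four domains $D[H]$, $D[H^{1/2}]$, $D[H^{1/2}P]$, and $D[H_P]$ and their interactions with the projections $P$ and $I-P$.
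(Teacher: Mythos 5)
Your argument is correct and follows essentially the route the paper intends: the proposition is stated there as an easy consequence of the form characterization \eqref{domain-H1/2P}, and your splitting of the form $u \mapsto \|H^{1/2}Pu\|^2$ into the direct sum of $w \mapsto \|H^{1/2}w\|^2$ on $D[H^{1/2}]\cap P\mathcal{H}$ and the zero form on $(I-P)\mathcal{H}$, together with the density and closedness checks and the uniqueness in the first representation theorem, is precisely that verification. No gap.
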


From Proposition~\ref{p:H_P}, we can see that $H_P$ is in general not a restriction of $H$, and furthermore, the inclusion $D[H_P] \subset D[H]$ does not hold either. This is obvious, since $H$ is the unique self-adjoint operator in ${\mathcal H}$ associated with the quadratic form
\begin{equation} \label{quad-H1/2}
 D[H^{1/2}] \ni w \mapsto  \|H^{1/2}w\|_{\mathcal H}^2\,,
\end{equation}
so that $(I-P){\mathcal H}$ is not a subset of $D[H]$ as long as $H$ is \emph{unbounded}.

\smallskip

We can make, however, a weaker claim described in the following proposition; we note in passing that it is well illustrated by the inclusion
$D[(-\Delta)_{\Omega}] \subset D[-\Delta]$ from the example treated in Sect.~\ref{s: example} below.

\begin{proposition} \label{p:DHP<DH}
Let $H$ be a nonnegative unbounded self-adjoint operator acting on a Hilbert space ${\mathcal H}$ and $P$ an orthogonal projection which may not commute with $H$.  Assume that $H^{1/2}P$ is densely defined, then the $P{\mathcal H}$-component $H_P\!\!\restriction\!_{P{\mathcal H}}$ of $H_P$ in \eqref{decomp-HP} satisfies $D[H_P\!\!\restriction\!_{P{\mathcal H}}] \subset D[H]$.
\end{proposition}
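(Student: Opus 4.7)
The essential observation is that what the proposition asserts is a \emph{form}-domain inclusion; indeed, the immediately preceding discussion of the Dirichlet-Laplacian example makes this clear, for while $H_0^1(\Omega) \subset H^1(\mathbb{R}^n)$ via zero extension is standard, the corresponding operator-domain inclusion $H^2(\Omega) \cap H_0^1(\Omega) \subset H^2(\mathbb{R}^n)$ fails (a zero extension across a smooth boundary typically produces a derivative jump, so the second weak derivative is not in $L^2$). Once this is recognised, the claim reduces to bookkeeping with \eqref{domain-H1/2P}, and no new estimates are required.

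First I would use Proposition~\ref{p:H_P} to decompose $H_P = H_P\!\!\restriction\!_{P\mathcal{H}} \oplus 0\!\restriction\!_{(I-P)\mathcal{H}}$, and then apply the functional calculus (equivalently, the polar decomposition of nonnegative self-adjoint operators) to obtain the analogous splitting of the square root, $H_P^{1/2} = (H_P\!\!\restriction\!_{P\mathcal{H}})^{1/2} \oplus 0\!\restriction\!_{(I-P)\mathcal{H}}$. Consequently the form domain of the $P\mathcal{H}$-component, which is the operator domain of its square root, is obtained by intersecting $D[H_P^{1/2}]$ with $P\mathcal{H}$.

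Second I would read off from \eqref{domain-H1/2P} the identity $D[H_P^{1/2}] = D[H^{1/2}P] = (D[H^{1/2}] \cap P\mathcal{H}) \oplus (I-P)\mathcal{H}$. Intersecting with $P\mathcal{H}$ kills the $(I-P)\mathcal{H}$-summand and yields
\[
D[H_P\!\!\restriction\!_{P\mathcal{H}}] = D[H^{1/2}] \cap P\mathcal{H} \subset D[H^{1/2}] = D[H],
\]
which is the asserted inclusion.

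There is no genuine obstacle in the argument; the whole proof is a single application of \eqref{domain-H1/2P} combined with the orthogonal decomposition from Proposition~\ref{p:H_P}. What gives the statement substance is its contrast with the unrestricted inclusion $D[H_P] \subset D[H]$, which fails precisely because the $(I-P)\mathcal{H}$-summand, on which $H_P^{1/2}$ vanishes identically but on which $H^{1/2}$ is generally only partially defined, is cut off once one restricts attention to $P\mathcal{H}$. The unboundedness hypothesis on $H$ enters only to sharpen the dichotomy, since for bounded $H$ both inclusions would be trivially true with $D[H] = \mathcal{H}$.
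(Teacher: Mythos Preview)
Your proof is correct and follows the same route as the paper's: both arguments reduce to the identity \eqref{domain-H1/2P} and establish only the \emph{form}-domain inclusion $D[(H_P\!\!\restriction\!_{P\mathcal{H}})^{1/2}] = D[H^{1/2}]\cap P\mathcal{H} \subset D[H^{1/2}]$. Your observation that the literal operator-domain reading fails for the Dirichlet-Laplacian example is a good catch---the paper's one-line proof likewise delivers nothing beyond the form-domain version, even though both the proposition and the illustrative remark preceding it are phrased with operator domains.
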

\noindent {\it Proof:} The form \eqref{quad-H1/2} restricted to $D[H^{1/2}P]$ becomes \eqref{quad-H1/2P}, so that $\|H^{1/2}w\|^2 = \|H^{1/2}Pw\|^2 = \|H_P^{1/2}w\|^2$ holds if $w \in D[H_P^{1/2}]$ proving thus the claim. \hfill
\qed

\begin{remark}
One can modify $H_P$ restricting it to the self-adjoint operator $H_P^\mathrm{mod}$ such that $H_P^\mathrm{mod}\!\!\restriction\!_{P{\mathcal H}} = H_P\!\!\restriction\!_{P{\mathcal H}}$ {\it and, at the same time,} $\,D[H_P^\mathrm{mod}] \subset D[H]$. To this end, it is enough to replace the zero operator in \eqref{decomp-HP} on $(I-P){\mathcal H}$ by its restriction to $(I-P)D[H]$, so that
\begin{align} \label{HPmd}
& H_P^\mathrm{mod}
  := H_P\!\!\restriction\!_{P{\mathcal H}}
                             \oplus 0\!\!\restriction\!_{(I-P)D[H]}\,,\\
& D[H_P^\mathrm{mod}]
  :=  D[H_P\!\!\restriction\!_{P{\mathcal H}}] \oplus (I-P)D[H]\,.\nonumber
\end{align}
Let us stress that in general $H$ is not an extension of $H_P^\mathrm{mod}$ either.
\end{remark}

Unless $H$ is bounded, the operator $H_P$ is generally different
from $PHP$. When $HP$ is densely defined, the symmetric operator
$PHP$ is not necessarily essentially self-adjoint on $D[HP]$, and
consequently, $H_P$ may not be the closure of $PHP$ either.

On the other hand, the quadratic form $u \mapsto \|H^{1/2}Pu\|^2$
defined on $D[H^{1/2}P]$ is a closed extension of the form $u
\mapsto \langle Pu,HPu \rangle$ defined on $D[HP]$, but in general
the former is not  the closure of the latter, because $D[HP]$ is not
necessarily dense in $D[H^{1/2}P]$. Indeed,  if $H$ is unbounded,
$D[H]$ is a proper subspace of $D[H^{1/2}]$. Take $u_0 \in D
[H^{1/2}]\backslash D[H]$ such that the vector $H^{1/2}u_0$ is
nonzero, and set $P$ to be the orthogonal projection onto the
one-dimensional subspace spanned by $u_0$. Taking into account that
$D[HP] = \{u \in {\mathcal H};\,\, Pu \in D[H]\}$ which $u_0 =Pu_0$
does not belong to, we find $HPu = 0$ for $u \in D[HP]$, while
$H^{1/2}Pu_0 = H^{1/2}u_0 \not= 0$ by assumption.


\section{Proof of Theorem~\ref{th:main'}}
\label{s: proof}

Most parts of the argument work for a general Hilbert space, however, there is a place where we have to assume ${\mathcal H}$ to be separable. It is clearly sufficient to prove formula \eqref{P(t)both} in Theorem~\ref{th:main'} because \eqref{P(t)right} and \eqref{P(t)left} easily follow from it, and furthermore, it is enough to consider $\varepsilon = 1$ and $t \geq 0$. For the sake of definiteness we use here and in the following the physicist convention about the inner product $\langle \cdot, \cdot \rangle$ on ${\mathcal H}$ supposing that it is antilinear in the \emph{first} argument.

\smallskip

The nonnegative self-adjoint operator $H$ we deal with can be conventionally represented through its spectral family $\{E(\lambda)\}_{\lambda\geq 0}$,
\begin{equation} \label{H-specrep}
 H = \int_{0-}^{\infty} \lambda E(\mathrm{d}\lambda)
\end{equation}
Given $\kappa>0$, we introduce
\begin{align}
K(\kappa) :&=\frac1{\kappa}[I-\mathrm{e}^{-i\kappa H}]
 = \frac{I-\cos \kappa H}{\kappa} +  i\frac{\sin \kappa H}{\kappa}
 \nonumber \\[-.5em]\nonumber\\[-.5em]
 &=: G(\kappa) + iH(\kappa)\,, \label{KGH}
\end{align}
where $G(\kappa)$ and $H(\kappa)$ are bounded self-adjoint
operators, and $G(\kappa)$ is nonnegative. Let $H^{\pm}(\kappa)$ be
nonnegative bounded self-adjoint operators which are the
nonnegative/negative parts of $H(\kappa)$, respectively. Then we
have
\begin{align}
|K(\kappa)| &= [G(\kappa)^2+H(\kappa)^2]^{1/2} = \tfrac{|\sin \frac12
\kappa H|}{\frac12\kappa}\,, \nonumber \\[.2em]
|I+ K(\kappa)| &=
\Big(I + (1+\kappa) \big(\tfrac{\sin \frac12\kappa H}{\frac12\kappa}\big)^2\Big)^{1/2}\,,
                                             \label{KGH2} \\[.5em]
 H(\kappa) &= H^+(\kappa) - H^-(\kappa)\,,
   \quad |H(\kappa)| = H^+(\kappa) + H^-(\kappa)\,. \nonumber
\end{align}
Using the spectral family $\{E(\lambda)\}_{\lambda\geq 0}$ from
\eqref{H-specrep}, we denote by $E_H^+(\kappa)$ the orthogonal projection
\begin{equation} 
E_H^+(\kappa) : \mathcal{H} \rightarrow \bigoplus_{m=0}^{\infty}
         E\big([\tfrac{2m\pi}{\kappa}, \tfrac{(2m+1)\pi}{\kappa})\big)
         \mathcal{H}
 \equiv E\Big(\bigcup_{m=0}^{\infty}
      [\tfrac{2m\pi}{\kappa},
      \tfrac{(2m+1)\pi}{\kappa})\Big)\mathcal{H}\,, \nonumber
\end{equation}
and put $E_H^-(\kappa) := I- E_H^+(\kappa)$. Note that
$E_H^+(\kappa) \overset s \to I$ and $E_H^-(\kappa) \overset s \to
0$ holds as $\kappa \rightarrow 0+$. Of course, both the
$E_H^{\pm}(\kappa)$ commute with $H(\kappa)$. As $E_H^+(\kappa)$ and
$E_H^-(\kappa)$ are nothing but the projections onto the closed
subspaces of $\mathcal{H}$ where $H(\kappa)$ becomes respectively a
nonnegative and negative self-adjoint operator,  we have
\begin{equation} \label{Hpm}
 H^+(\kappa) = H(\kappa)E_H^+(\kappa)\,,
       \quad H^-(\kappa) = -H(\kappa)E_H^-(\kappa)\,.
\end{equation}
We put $\kappa = |t|\tau$ with $\tau >0$, so that
$\kappa = \pm\, t\tau$ holds for $\pm\, t \geq 0$.

We will use the same notation as in \cite{EI05}, $F(\zeta;\tau) := P(\tau)\, \mathrm{e}^{-\zeta\tau H}P(\tau)$ and $S(\zeta;\tau) := \tau^{-1}[I-F(\zeta;\tau)]$ with $\text{\rm Im}\, \zeta \geq0$. These operator families are uniformly bounded and holomorphic for $\text{\rm Im}\, \zeta > 0$ which are the properties we used there. In this paper, however, we need only $\zeta= it$, in other words
\begin{align}
F(it;\tau) &= P(\tau)\, \mathrm{e}^{-it\tau H}P(\tau)\,,   \label{Ftau}\\
S(it;\tau) &= \tau^{-1}[I-F(it;\tau)]
  =\tau^{-1}[I- P(\tau)\,\mathrm{e}^{-it\tau H}P(\tau)]\,. \label{Stau}
\end{align}
It is easy to see that $F(it;\tau)$ in \eqref{Ftau} is a contraction and
$S(it;\tau)$ in \eqref{Stau} satisfies
\begin{align}
\mathrm{Re}\,(f,S(it;\tau)f)
& =\tau^{-1} \big[(f,f) -(f, P(\tau)\,\mathrm{e}^{-it\tau H}P(\tau)f) \big] \nonumber \\[.2em]
& \geq \tau^{-1} \big[\|f\|^2 -\|f\|\,\|P(\tau)\,\mathrm{e}^{-it\tau H}P(\tau)f\| \big] \nonumber \\[.2em]
& \geq \tau^{-1} \big[\|f\|^2- \|f\|^2 \big] =0 \nonumber
\end{align}
for all $f \in \mathcal{H}$, and consequently, $S(it;\tau)$ is an
$m$-accretive operator \cite{Ka76}. This means that $I+S(it;\tau)$ has a
bounded inverse and that $(I+ S(it;\tau))^{-1}$ is also a contraction.
The crucial observation is that to prove Theorem~\ref{th:main'} it is
sufficient to refer to Chernoff's result cited below,
and to verify that
\begin{equation} \label{Chern}
(I+S(it;\tau))^{-1} \overset s \longrightarrow (I+itH_P)^{-1}P
\quad \mathrm{as}\;\; \tau \rightarrow 0+
\end{equation}
holds for $t \in {\mathbb R}$.

\smallskip\noindent
\textbf{Chernoff's Theorem} (cf.~\cite[Theorem~1.1, pp.~4--6]{Ch74}, see also \cite{Ch68}):
\textsl{For a $t$-family $\{F(t)\}_{t\geq 0}$ of linear contractions on a Banach space and the generator $A$ of a strongly continuous contraction semigroup, the following two conditions are equivalent: }

\textrm{(a)} \textsl{For some $\lambda_0>0$, the family $\{\lambda_0 I + \tfrac{I-F(\varepsilon)}{\varepsilon}\}_{\varepsilon >0}$ converges strongly to $(\lambda_0 I+A)^{-1}$ as $\varepsilon \to 0+$. }

\textrm{(b)} \textsl{As $n\to \infty$, $\{F(\tfrac{t}{n})\}_{n=1}^{\infty}$ converges strongly to $\mathrm{e}^{tA}$, uniformly on bounded $t$-intervals.}

\medskip

Let us stress that our proof requires to demonstrate the convergence in \eqref{Chern} \emph{pointwise for any fixed $t$}; this will be sufficient to establish the convergence in the product formul{\ae} \eqref{P(t)both}--\eqref{P(t)left} as \emph{locally uniform} in $t \in {\mathbb R}$; in fact, we have only to deal with \eqref{P(t)both} as mentioned at the beginning of this section.

\begin{remark}
Since the function $F(it;\tau)$ in \eqref{Ftau} differs slightly from $F(t)$ appearing in condition (a) of Chernoff's Theorem, let us explain in detail how the product formula \eqref{P(t)both} follows from \eqref{Chern}, modifying to that purpose Chernoff's proof of the implication (a)$\,\Rightarrow\,$(b).

Consider the first the nontrivial part referring to the subspace $P{\mathcal H}$. Note that $S(it;\tau)$ generates a strongly continuous contraction semigroup $\{\mathrm{e}^{-\theta S(it;\tau)}\}_{\theta\geq 0}$ on ${\mathcal H}$, and the resolvent convergence \eqref{Chern} is equivalent to the convergence of the corresponding semigroups \cite[Theorem~IX.2.16]{Ka76}, hence for any $f\in P{\mathcal H}$ we have $\mathrm{e}^{-\theta S(it;\tau)}f \overset s \longrightarrow \mathrm{e}^{-i\theta tH_P}f$ as $\tau\to 0+$, uniformly on bounded intervals of the variable $\theta\geq 0$. In particular, choosing $\theta=1$ we get
\begin{equation} \label{remv-theta}
 \mathrm{e}^{-S(it;\tau)}f \longrightarrow \mathrm{e}^{-itH_P}f\quad\text{\rm as}\;\; \tau\to 0+\,
\end{equation}
for a fixed $t\geq 0$, and using the same equivalence in the opposite direction we infer that
$$
 (I+\lambda S(it;\tau))^{-1}f \longrightarrow
                        (I+i\lambda tH_P)^{-1}Pf \quad\text{\rm as}\;\;\tau\to 0+
$$
holds for any $\lambda\ge 0$ and $t\ge 0$. In particular, using the diagonal trick in the last relation with $\tau=1/n$ and $\lambda = 1/\sqrt{n}$, we obtain
\begin{equation} \label{diag-trick}
 (I +\tfrac1{\sqrt{n}}S(it; 1/n))^{-1}f \longrightarrow Pf
 \quad\text{\rm as}\;\; n \to \infty
\end{equation}
for every $t\geq 0$. Next we refer to \cite[Lemma 2]{Ch68} by which we have
$$
 \|[F(it; 1/n)^ng -\mathrm{e}^{-n(I- F(it; 1/n))}g\|
               \leq \sqrt{n}\,\|(I-F(it; 1/n))g\|
$$
for any $g \in {\mathcal H}$. Choosing $g= (I + \tfrac1{\sqrt{n}}S(it; 1/n))^{-1}f$ and having in mind that $f=Pf$, we conclude from here that
\begin{align*}
&\big\|[F(it; 1/n)^n - \mathrm{e}^{-S(it; 1/n)}]
                     (I + \tfrac1{\sqrt{n}}S(it; 1/n))^{-1}f\big\| \\
 &\qquad \leq \big\|(I + \tfrac1{\sqrt{n}}S(it; 1/n))^{-1}f - Pf\|\,,
\end{align*}
where by \eqref{diag-trick} the right-hand side tends to zero uniformly on bounded $t$-intervals as $n \to \infty$. Using the diagonal trick once again we get
$$
 \lim_{n\to\infty} \|F(it; 1/n)^nf - \mathrm{e}^{-S(it; 1/n)}f\| = 0\,
$$
uniformly on bounded $t$-intervals. Then the sought conclusion \eqref{P(t)both} follows immediately from relations \eqref{remv-theta} and \eqref{diag-trick}, since by \eqref{Ftau} we have $F(it; 1/n)^n =  [P(1/n) \mathrm{e}^{-itH/n} P(1/n)]^n$. Having dealt with the subspace $P\mathcal{H}$, the remaining case, $f \in (P{\mathcal H})^{\perp}$, is trivial: we have
$$
 [P(1/n) \mathrm{e}^{-itH/n} P(1/n)]^n f  \longrightarrow 0 \quad\text{\rm as}\;\; n \to \infty
$$
for each $t \geq 0$, since $P(1/n)f = P(1/n)(I-P)f$ tends by assumption to $P(I-P)f =0$ and $\mathrm{e}^{-itH_P}Pf =0$ holds at the same time.
\end{remark}

\smallskip

Let us add a remark on the conventions: here and in the following the convergence of operator families in the strong operator topology is denoted by $\overset s \longrightarrow$, and in the weak operator topology by $\overset w \longrightarrow$. A simple arrow is reserved for the convergence with respect to the norm of the Hilbert space $\mathcal{H}$, sometimes also dubbed `strong' -- we will occasionally use this term too -- while for the weak convergence in $\mathcal{H}$ we will again employ the symbol $\overset w \longrightarrow$. Later, in Proposition~\ref{p:sigmaweak-bdd}, we will use still another topology for convergence of families of Hilbert space vectors.

To proceed with the argument, we rewrite relation \eqref{Stau} as follows
\begin{align*}
S(it;\tau) &= \tau^{-1} \big[I- P(\tau)(\cos t\tau H- i\sin t\tau H)P(\tau)\big] \\[.2em]
&= \tfrac{I-P(\tau)}{\tau}
   + P(\tau)\,\tfrac{I-\cos t\tau H}{\tau}\,P(\tau)
  + i\,P(\tau)\,\tfrac{\sin t\tau H}{\tau}\,P(\tau) \\[.2em]
&= \tfrac{I-P(\tau)}{\tau} +P(\tau)tG(t\tau)P(\tau)
  + i\,P(\tau)tH(t\tau)P(\tau)\,,
\end{align*}
and consequently,
\begin{align} \label{IplusS}
I &+ S(it;\tau) \\[.2em]
&= I + \tau^{-1}(I-P(\tau))
    +P(\tau)tG(t\tau)P(\tau) +iP(\tau)tH(t\tau)P(\tau) \nonumber \\[.2em]
&= (1 +\tau^{-1})(I- P(\tau)) \oplus P(\tau)(I+ tG(t\tau)
+ itH(t\tau))P(\tau)\,, \nonumber
\end{align}
where $\oplus$ denotes the direct sum corresponding to the
decomposition of the Hilbert space into $P(\tau)\mathcal{H}$ and its
orthogonal complement.

To prove the sought relation \eqref{Chern} we need first a pair of lemmata.

\begin{lemma} \label{l:inverse}
The inverse of $I+S(it;\tau)$ in \eqref{IplusS} is given by
\begin{align} \label{inverse-of-IplusS}
(I &+ S(it;\tau))^{-1} \\
&= (1+\tau^{-1})^{-1}(I- P(\tau))
   \oplus \big[P(\tau)(I+ tG(t\tau) + itH(t\tau))P(\tau)\big]^{-1}. \nonumber
\end{align}
\end{lemma}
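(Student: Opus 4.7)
The plan is to verify \eqref{inverse-of-IplusS} by unpacking the direct-sum decomposition displayed on the last line of \eqref{IplusS}. I would first confirm that the second equality in \eqref{IplusS} is indeed a direct sum with respect to the orthogonal splitting $\mathcal{H} = (I-P(\tau))\mathcal{H} \oplus P(\tau)\mathcal{H}$: writing $I = P(\tau) + (I-P(\tau))$ and $P(\tau) = P(\tau)\,I\,P(\tau)$, one sees that the first summand $(1+\tau^{-1})(I-P(\tau))$ has range in $(I-P(\tau))\mathcal{H}$ and annihilates $P(\tau)\mathcal{H}$, while $P(\tau)[I+tG(t\tau)+itH(t\tau)]P(\tau)$ has range in $P(\tau)\mathcal{H}$ and annihilates $(I-P(\tau))\mathcal{H}$, so the symbol $\oplus$ is legitimate.

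Next, each summand is inverted separately on its own invariant subspace. On $(I-P(\tau))\mathcal{H}$ the first summand acts as multiplication by the nonzero scalar $1+\tau^{-1}$, whose inverse (extended by zero on $P(\tau)\mathcal{H}$) is $(1+\tau^{-1})^{-1}(I-P(\tau))$. For the second summand, viewed as an operator on $P(\tau)\mathcal{H}$, I would establish bounded invertibility by an accretivity estimate: since $G(t\tau) \geq 0$ by \eqref{KGH2} and $H(t\tau)$ is bounded self-adjoint, for every $v \in P(\tau)\mathcal{H}$,
\begin{equation*}
\mathrm{Re}\,\langle v,\, P(\tau)[I+tG(t\tau)+itH(t\tau)]P(\tau)v\rangle = \|v\|^2 + t\,\langle v,\, G(t\tau)v\rangle \geq \|v\|^2.
\end{equation*}
Combined with the boundedness of $tG(t\tau)+itH(t\tau)$, this yields a bounded inverse of $P(\tau)[I+tG(t\tau)+itH(t\tau)]P(\tau)$ on $P(\tau)\mathcal{H}$, understood to be extended trivially by zero on $(I-P(\tau))\mathcal{H}$.

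Combining the two partial inverses along the orthogonal decomposition then yields precisely \eqref{inverse-of-IplusS}. No serious obstacle is expected: bounded invertibility of $I+S(it;\tau)$ as a whole has already been noted via the $m$-accretivity of $S(it;\tau)$ established just before the lemma, so the real content reduces to the algebraic bookkeeping of the direct-sum decomposition and the mild accretivity check on $P(\tau)\mathcal{H}$.
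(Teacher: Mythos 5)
Your proposal is correct and takes essentially the same route as the paper: the paper's proof simply notes that the right-hand side of \eqref{inverse-of-IplusS} and the direct-sum expression \eqref{IplusS} multiply to the identity, with invertibility already supplied by the $m$-accretivity of $S(it;\tau)$ observed just before the lemma. Your block-wise inversion, with the scalar factor on $(I-P(\tau))\mathcal{H}$ and the accretivity estimate on the $P(\tau)\mathcal{H}$ component (note only $tG(t\tau)\ge 0$ is what one uses, which holds for either sign of $t$), is just a slightly more explicit rendering of the same bookkeeping.
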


\vspace{-.3em}

\begin{proof}
The above expression and \eqref{IplusS} clearly multiply to
identity.
\end{proof}

\begin{lemma} \label{l:conv}
Let $u \in D[H^{1/2}]$, then in the limit $\kappa \rightarrow 0+$ we have
\begin{align*}
\mathrm{(i)}\,\, &G(\kappa)^{1/2} u \longrightarrow 0\,, \\[.2em]
\mathrm{(ii)}\,\,
 &H^+(\kappa)^{1/2} u \longrightarrow  H^{1/2}u\,, \;
  H^-(\kappa)^{1/2} u \longrightarrow  0\,, \; \mathrm{and}\;\;
  |H(\kappa)|^{1/2} u \longrightarrow  H^{1/2}u
\end{align*}
\end{lemma}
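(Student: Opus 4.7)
The plan is to reduce each assertion to the scalar dominated convergence theorem via the spectral representation of $H$. Writing $\mathrm{d}\mu_u(\lambda):=\mathrm{d}\|E(\lambda)u\|^2$, the hypothesis $u\in D[H^{1/2}]$ is equivalent to $\int_0^\infty \lambda\,\mathrm{d}\mu_u(\lambda)=\|H^{1/2}u\|^2<\infty$, so $\lambda$ will serve as an integrable, $\kappa$-uniform majorant throughout. The only scalar inputs I need are the elementary inequalities $1-\cos y\leq |y|$ and $|\sin y|\leq |y|$, both following from $2\sin^2(y/2)\leq 2|\sin(y/2)|\leq |y|$.

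For part (i), functional calculus gives $\|G(\kappa)^{1/2}u\|^2=\int_0^\infty \kappa^{-1}(1-\cos\kappa\lambda)\,\mathrm{d}\mu_u(\lambda)$; the integrand vanishes pointwise as $\kappa\to 0+$ and is bounded above by $\lambda$ uniformly in $\kappa$ thanks to the first inequality, so dominated convergence closes this case at once.

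For part (ii), the spectral theorem combined with \eqref{Hpm} identifies $H^+(\kappa)$ and $H^-(\kappa)$ with multiplication by the nonnegative functions $\phi^+_\kappa(\lambda)=\sin(\kappa\lambda)\chi^+_\kappa(\lambda)/\kappa$ and $\phi^-_\kappa(\lambda)=-\sin(\kappa\lambda)\chi^-_\kappa(\lambda)/\kappa$, where $\chi^+_\kappa$ is the indicator of $\{\lambda\geq 0:\sin\kappa\lambda\geq 0\}$ and $\chi^-_\kappa=1-\chi^+_\kappa$. For any fixed $\lambda\geq 0$ and all $\kappa<\pi/\lambda$ one has $\chi^+_\kappa(\lambda)=1$ and $\chi^-_\kappa(\lambda)=0$, so $\phi^+_\kappa(\lambda)\to \lambda$ and $\phi^-_\kappa(\lambda)\to 0$ pointwise. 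Since $0\leq \phi^\pm_\kappa(\lambda)\leq |\sin\kappa\lambda|/\kappa\leq \lambda$, the integrands representing $\|H^+(\kappa)^{1/2}u-H^{1/2}u\|^2$ and $\|H^-(\kappa)^{1/2}u\|^2$ are pointwise vanishing and dominated by $4\lambda$, and dominated convergence delivers the first two limits of (ii). The final claim $|H(\kappa)|^{1/2}u\to H^{1/2}u$ then follows by the same recipe from the pointwise limit $|\sin\kappa\lambda|/\kappa\to \lambda$, or alternatively from $|H(\kappa)|=H^+(\kappa)+H^-(\kappa)$ together with the orthogonality of the spectral subspaces on which those two operators act.

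I do not foresee any serious obstacle; the proof is essentially two applications of dominated convergence. The one observation to single out is that the projections $E_H^\pm(\kappa)$ sweep out every bounded spectral set at rate $\pi/\kappa$, so on such sets $E_H^+(\kappa)\to I$ and $E_H^-(\kappa)\to 0$ as $\kappa\to 0+$, which is precisely what powers the pointwise convergence of the integrands in (ii); everything else is standard scalar domination matched to the form-domain hypothesis $u\in D[H^{1/2}]$.
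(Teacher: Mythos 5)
Your proof is correct and follows essentially the same route as the paper: the spectral representation of $H$ combined with the dominated convergence theorem, with $\lambda$ (integrable against $\mathrm{d}\|E(\lambda)u\|^2$ because $u\in D[H^{1/2}]$) as the $\kappa$-uniform majorant. The only cosmetic difference is that you argue with scalar integrals of squared norms and prove $H^-(\kappa)^{1/2}u\to 0$ directly, whereas the paper writes vector-valued spectral integrals and deduces the minus-sign limit from $|H(\kappa)|=H^+(\kappa)+H^-(\kappa)$; both are equally valid.
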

\begin{proof}
Using spectral theorem together with dominated convergence theorem we infer
that as $\kappa \rightarrow 0+$,
\begin{align*}
G(\kappa)^{1/2} u
 &= \int_{0-}^{\infty}
  \Big|\tfrac{\sin\tfrac{\kappa \lambda}2}{\tfrac{\kappa\lambda}2}\Big|^{1/2}
          |\sin\tfrac{\kappa \lambda}2|^{1/2}\,\lambda^{1/2}
   E(\mathrm{d}\lambda)u  \,\longrightarrow 0\,,
\end{align*}

\vspace{.5em}

$$
|H(\kappa)|^{1/2} u = \int_{0-}^{\infty} \Big|\tfrac{\sin \kappa
\lambda}{\kappa}\Big|^{1/2}
     E(\mathrm{d}\lambda)u \longrightarrow \int_{0-}^{\infty}
     \lambda^{1/2}  E(\mathrm{d}\lambda) u = H^{1/2}u
$$
and
\begin{align*}
H^+(\kappa)^{1/2} u
&= |H(\kappa)|^{1/2}E_H^+(\kappa) u\\[.2em]
&= \int_{0-}^{\infty} \Big|\tfrac{\sin \kappa
\lambda}{\kappa}\Big|^{1/2}
    \chi_{\bigcup_{m=0}^{\infty}
           [\tfrac{2m\pi}{\kappa}, \tfrac{(2m+1)\pi}{\kappa})}(\lambda)\,
           E(\mathrm{d}\lambda) u \\
 &\qquad\qquad
   \longrightarrow  \int_{0-}^{\infty} \lambda^{1/2}
     E(\mathrm{d}\lambda) u = H^{1/2}u
\end{align*}
This implies at the same time $H^-(\kappa)^{1/2} u \longrightarrow 0\,$ by \eqref{KGH2}.
\end{proof}

After these preliminaries, we are going to start the proof of
\eqref{Chern}. As we said in the introduction, the core of our
reasoning is the argument used by Kato \cite{Ka78}, see also
\cite[Supplements to Sect.~VIII.8]{RS80}.
In the same vein, with \eqref{Stau} in mind, we put for
$\tau>0$ and $t \in {\mathbb R}$
\begin{equation} \label{def-of-utau}
u_{\tau}(t) := (I+S(it;\tau))^{-1}f
\end{equation}
for an arbitrary but fixed $f \in \mathcal{H}$. We see from the expression
\eqref{inverse-of-IplusS} of $(I+S(it;\tau))^{-1}$ that
$u_{\tau}(t)$ is strongly continuous in $t$. At the same time, the
$\tau$-family $\{u_{\tau}(t)\} \subset {\mathcal H}$ with
$u_{\tau}(t)$ defined by
\eqref{def-of-utau} and $t \in {\mathbb R}$ fixed is uniformly bounded
by $\|f\|$, because $(I+S(it;\tau))^{-1}$ is a contraction.

Our next aim is to show that for each fixed $t \in {\mathbb R}$, the
family $\{u_{\tau}(t)\}$ converges in the Hilbert space norm to some
$u(t) \in {\mathcal H}$ as $\tau\to 0+$ and that $u(t) =
(I+itH_P)^{-1}Pf$. To achieve this goal we will have to analyze, in
particular, a new uniform property of the $\tau$-family
$\{u_{\tau}(t)\}$ at the final stage of the argument.

Using the identity \eqref{IplusS} we can invert relation
\eqref{def-of-utau} explicitly as
\begin{align} \label{fIplusS}
f&= (I+ S(it;\tau))u_{\tau}(t) \nonumber \\[.2em]
 &= u _{\tau}(t) + \tau^{-1}(I-P(\tau))u _{\tau}(t)
        +P(\tau)tG(t\tau)P(\tau)u_{\tau}(t) \nonumber\\
 &\qquad\qquad\qquad\qquad\qquad\qquad
         +iP(\tau)tH(t\tau) P(\tau)u _{\tau}(t) \nonumber \\[.2em]
 &= (1+ \tau^{-1})(I-P(\tau))u_{\tau}(t)
     \oplus P(\tau)[I+ tG(t\tau) +itH(t\tau)] P(\tau)u _{\tau}(t)\,.
\end{align}
Taking the inner product of $f$ with $(I-P(\tau))u _{\tau}(t)$ and
$P(\tau)u _{\tau}$, we get
\begin{equation} \label{innerIminusP}
\langle (I-P(\tau))u _{\tau}(t),f \rangle
= (1+\tau^{-1})\|(I-P(\tau))u _{\tau}(t)\|^2
\end{equation}
and
\begin{align} \label{innerP}
\langle P(\tau)u_{\tau}(t), f \rangle
&= \langle P(\tau)u_{\tau}(t), u_{\tau}(t) \rangle
  + \langle P(\tau)u_{\tau}(t), tG(t\tau)P(\tau)u_{\tau}(t)\rangle
                                                          \nonumber\\
 & \quad +i\langle P(\tau)u _{\tau}(t),
                 tH(t\tau) P(\tau)_{\tau}(t)\rangle \nonumber\\[.2em]
&= \|P(\tau)u _{\tau}(t)\|^2
  + \|(|t|G(t\tau))^{1/2}P(\tau)u_{\tau}(t)\|^2 \nonumber \\
  &\quad +i\big(\|(|t|H^+(t\tau))^{1/2} P(\tau)u_{\tau}(t) \|^2 \nonumber \\
  &\quad\quad -\|(|t|H^-(t\tau))^{1/2} P(\tau)u_{\tau}(t) \|^2\big)
\end{align}
for $\tau >0\,$; recall that $H(\kappa) = H^+(\kappa) - H^-(\kappa)$.

From the real part of \eqref{innerP} with \eqref{innerIminusP}, we infer,
using Schwarz inequality, that the $\tau$-families $\{P(\tau)u
_{\tau}(t)\}$ and $\{(I-P(\tau))u _{\tau}(t)\}$, as well as
$\{\tau^{-1}(I-P(\tau))u _{\tau}(t)\}$, are uniformly bounded by
$\|f\|$; the last claim naturally extends to
$\{\tau^{-1/2}(I-P(\tau))u _{\tau}(t)\}$ as long as $\tau\le 1$.
Moreover, using the real part of \eqref{innerP} again we can
conclude that the same is true for $\{(|t|G(\tau))^{1/2}
P(\tau)u_{\tau}(t)\}$. Let $f \in {\mathcal H}$ and $t \in {\mathbb
R}$ be arbitrary but fixed. It follows from the obtained uniform
boundedness that for each $t \in {\mathbb R}$, there exist a
(sub)sequence $\{\tau'\}_{0<\tau'\leq 1}$ of the family
$\{\tau\}_{0<\tau\le 1}$ with $\tau' \rightarrow 0+$ and vectors
$u(t),\, u_0(t)$ and $g(t)$ in ${\mathcal H}$ such that the
sequences $\{u_{\tau'}(t)\}$,
$\{(\tau')^{-1/2}(I-P(\tau'))u_{\tau'}(t)\}$ and $\{t^{1/2}
G(|t|\tau')^{1/2}u_{\tau'}(t)\}$ converge weakly to $u(t)$, $u_0(t)$
and $g(t)$, respectively, in ${\mathcal H}$ as $\tau' \to 0+$;
thus $\{P(\tau')u_{\tau'}(t)\}$ converges weakly to $Pu(t)$. We keep in mind
that, with the knowledge we have at the present stage, the limit
$u(t)$ may depend on the chosen sequence
$\{u_{\tau'}(t)\}_{0<\tau'\le 1}$.
\begin{lemma}\label{l:weaklimit}
One has
\begin{equation}\label{weaklimit}
 u(t) = Pu(t), \quad  u_0(t)=0, \quad g(t)=0,
\end{equation}
and therefore, as $\tau'\to 0+$,
\begin{align} \label{weakconv}
&u_{\tau'}(t) \overset w \longrightarrow  u(t) \,,\;\;
({\tau'})^{-1/2}(I-P(\tau'))u_{\tau'}(t)\overset w \longrightarrow  0\,,\;\; \\
& P(\tau')u _{\tau'}(t) \overset w \longrightarrow  Pu(t)\,,\;\;\nonumber
(|t|G(t\tau'))^{1/2} P(\tau')u_{\tau'}(t) \overset w \longrightarrow  0\,.
\end{align}
\end{lemma}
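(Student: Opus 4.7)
The plan is to exploit the orthogonal block decomposition~\eqref{IplusS}--\eqref{fIplusS} of $I+S(it;\tau)$ to extract an \emph{explicit} formula for the $(I-P(\tau))\mathcal{H}$ component of $u_\tau(t)$, and then to use Lemma~\ref{l:conv}(i) together with the strong continuity of $\tau\mapsto P(\tau)$ to pin down the three weak limits. Projecting $f=(I+S(it;\tau))u_\tau(t)$ onto $(I-P(\tau))\mathcal{H}$ in~\eqref{fIplusS}---the second summand there lies entirely in $P(\tau)\mathcal{H}$---yields at once
\[
(I-P(\tau))u_\tau(t)=\tfrac{\tau}{1+\tau}(I-P(\tau))f,
\]
so $\tau^{-1/2}(I-P(\tau))u_\tau(t)=\tfrac{\tau^{1/2}}{1+\tau}(I-P(\tau))f\to 0$ in norm as $\tau\to 0+$; this forces the weak limit $u_0(t)=0$, and in addition gives $(I-P(\tau'))u_{\tau'}(t)\to 0$ strongly.

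Next, to obtain $g(t)=0$ I would test against any $v\in D[H^{1/2}]$. Self-adjointness of the bounded operator $G(t\tau')^{1/2}$ gives
\[
\langle v,t^{1/2}G(t\tau')^{1/2}u_{\tau'}(t)\rangle
=t^{1/2}\langle G(t\tau')^{1/2}v,u_{\tau'}(t)\rangle,
\]
and Lemma~\ref{l:conv}(i) says $G(t\tau')^{1/2}v\to 0$ in norm while $\|u_{\tau'}(t)\|\le\|f\|$ is uniformly bounded, so the inner product goes to $0$; density of $D[H^{1/2}]$ in $\mathcal{H}$ then forces $g(t)=0$. For the identity $u(t)=Pu(t)$, I would write $u_{\tau'}(t)=P(\tau')u_{\tau'}(t)+(I-P(\tau'))u_{\tau'}(t)$: the second summand already converges strongly to $0$ by Step~1, so $P(\tau')u_{\tau'}(t)\overset{w}\longrightarrow u(t)$; on the other hand $P(\tau)\overset{s}\longrightarrow P$ combined with $u_{\tau'}(t)\overset{w}\longrightarrow u(t)$ gives, via $\langle v,P(\tau')u_{\tau'}(t)\rangle=\langle P(\tau')v,u_{\tau'}(t)\rangle\to\langle Pv,u(t)\rangle$, that $P(\tau')u_{\tau'}(t)\overset{w}\longrightarrow Pu(t)$. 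Uniqueness of the weak limit forces $u(t)=Pu(t)$, and the weak convergences collected in~\eqref{weakconv} are then immediate restatements.

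The only genuinely structural input is the block identity in the first step; there the orthogonality of $P(\tau)$ and $I-P(\tau)$ converts the apparently singular $\tau^{-1}(I-P(\tau))$ summand in $S(it;\tau)$ into sharp $O(\tau)$ control on $(I-P(\tau))u_\tau(t)$, and the other two identifications then fall out almost for free from Lemma~\ref{l:conv} and the standing strong continuity of $P(\cdot)$.
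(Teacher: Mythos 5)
Your proposal is correct and follows essentially the same route as the paper: control of the $(I-P(\tau))$-component, testing the $G$-family against the dense set $D[H^{1/2}]$ via Lemma~\ref{l:conv}(i), and identifying $u(t)=Pu(t)$ from the strong vanishing of $(I-P(\tau'))u_{\tau'}(t)$ together with $P(\tau')\overset{s}{\to}P$. The only (harmless) variation is that you read off the exact identity $(I-P(\tau))u_\tau(t)=\tfrac{\tau}{1+\tau}(I-P(\tau))f$ from the block inverse of Lemma~\ref{l:inverse}, whereas the paper gets the equivalent $O(\tau)$ bound from the quadratic identity \eqref{innerIminusP} and the Schwarz inequality.
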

\noindent\emph{Proof.}
To begin with, the second limit in \eqref{weakconv} implies, in
particular, that $(I-P(\tau'))u _{\tau'}(t) \to 0$, hence we have
$(I-P)u(t) =0$ or $u(t) =Pu(t) $. Indeed, the uniform boundedness of
$\|(\tau')^{-1/2}(I-P(\tau'))u_{\tau'}(t)\|$ means that
$\|(I-P(\tau'))u_{\tau'}(t)\|\to 0$ which together with the weak convergence
implies $u(t) = Pu(t)$. Moreover, the same
uniform boundedness following from \eqref{innerIminusP} means that
$\tau^{-1}(I-P(\tau)) u_{\tau}(t)$ also converges weakly along
$\{\tau'\}$ which implies $u_0(t)=0$.

Secondly, for any fixed $w \in D[H^{1/2}]$ and $t \in {\mathbb R}$
we have the relation
\begin{align*}
\langle w, g(t) \rangle
& = \lim_{\tau'\to 0+}\, \langle w, (|t|G(t\tau'))^{1/2} P(\tau')u_{\tau'}(t)\rangle \\
& = \lim_{\tau'\to 0+}\, \langle (|t|G(t\tau'))^{1/2}w, P(\tau')u_{\tau'}(t)\rangle \\
& = \langle 0,Pu(t)  \rangle =0\,, 
\end{align*}
because by Lemma~\ref{l:conv} we have $(|t|G(t\tau'))^{1/2}w \rightarrow 0$
as $\tau'\rightarrow 0+$. This means that $g(t)=0$, because $D[H^{1/2}]$
is dense by assumption.
\qed

\medskip

Next we want to find out what one can deduce from the imaginary part of \eqref{innerP} about the properties of the operators $H^\pm(t\tau)$ introduced in \eqref{KGH2} and \eqref{Hpm}. To this aim we have to employ a topology different from those used up to now. Given a dense subspace $\mathcal{K}$ of $\mathcal{H}$, the symbol $\sigma(\mathcal{H}, \mathcal{K})$ denotes the \emph{weak topology on $\mathcal{H}$ defined by the dual pairing $\langle \mathcal{H}, \mathcal{K} \rangle$} or the \emph{$\mathcal{K}$-weak topology on $\mathcal{H}$} -- see, e.g., \cite[Sect.~IV.20.2]{Ko69} or \cite[Sect.~IV.5]{RS80}. In general, it is weaker (coarser) than the usual weak topology on $\mathcal{H}$, the latter being nothing else than
$\sigma(\mathcal{H}, \mathcal{H})$ in the just introduced notation

We see from the imaginary part of \eqref{innerP} that the difference between the respective elements of the $\tau$-families $\{\|(|t|H^+(t\tau))^{1/2} P(\tau)u_{\tau}(t) \|^2\}$ and $\{\|(|t|H^-(t\tau))^{1/2} P(\tau)u_{\tau}(t) \|^2\}$ is uniformly bounded, because by Schwarz inequality  the modulus of the left-hand side in
\begin{align*} 
&\mathrm{Im}\,\langle Pu(t),f \rangle \\
 & = \lim_{\tau\to 0+}\,\big[\|(|t|H^+(t\tau))^{1/2}P(\tau)u_{\tau}(t)\|^2
         -\|(|t|H^-(t\tau))^{1/2}P(\tau)u_{\tau}(t)\|^2\big]
\end{align*}
does not exceed $\|f\|^2$. This fact, unfortunately, does not tell us whether the two $\tau$-families of vectors, $\{(|t|H^{\pm}(t\tau))^{1/2} P(\tau)u_{\tau}(t)\}$, are separately
(uniformly) bounded, that is, whether each of them is (uniformly) weakly bounded. We have, however, at least the following result.
\begin{proposition} \label{p:sigmaweak-bdd}
Let $\{u _{\tau'}(t)\}$ be the subsequence appearing in Lemma~\ref{l:weaklimit}, weakly convergent to $u(t)$. Then the $\tau'$-families $\{(|t|H^{\pm}(t\tau'))^{1/2} P(\tau')u_{\tau'}(t)\}$ are \emph{Cauchy} sequences in the $\sigma(\mathcal{H}, D[H^{1/2}])$-weak topology, and as a result they are $\sigma(\mathcal{H}, D[H^{1/2}])$-weakly bounded. Furthermore, the family $\{(|t|H^{-}(t\tau'))^{1/2} P(\tau')u_{\tau'}(t)\}$ converges to zero in this topology as $\tau' \rightarrow 0+$.
\end{proposition}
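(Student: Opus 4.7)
The plan is to transfer the bounded self-adjoint square-root operators $(|t|H^{\pm}(t\tau'))^{1/2}$ from the right factor of the inner product onto the test vector, where a strong limit is available. For any fixed $w \in D[H^{1/2}]$ one has
\begin{equation*}
\langle w,\, (|t|H^{\pm}(t\tau'))^{1/2} P(\tau')u_{\tau'}(t)\rangle
= \langle (|t|H^{\pm}(t\tau'))^{1/2} w,\, P(\tau')u_{\tau'}(t)\rangle,
\end{equation*}
after which each factor on the right can be controlled by a previously established result.

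Indeed, by Lemma~\ref{l:conv} we have, as $\tau'\to 0+$, the strong limits $(|t|H^{+}(t\tau'))^{1/2}w \to |t|^{1/2}H^{1/2}w$ and $(|t|H^{-}(t\tau'))^{1/2}w \to 0$, while by Lemma~\ref{l:weaklimit}, $P(\tau')u_{\tau'}(t) \overset w \longrightarrow Pu(t)$; moreover $\|P(\tau')u_{\tau'}(t)\| \leq \|f\|$ from the contraction property of $(I+S(it;\tau))^{-1}$. The standard strong/weak pairing identity
\begin{equation*}
\langle a_{\tau'}, b_{\tau'}\rangle - \langle a, b\rangle
= \langle a_{\tau'} - a,\, b_{\tau'}\rangle + \langle a,\, b_{\tau'} - b\rangle
\end{equation*}
then yields, for every $w\in D[H^{1/2}]$,
\begin{align*}
\langle w,\, (|t|H^{+}(t\tau'))^{1/2}P(\tau')u_{\tau'}(t)\rangle
  &\longrightarrow \langle |t|^{1/2}H^{1/2}w,\, Pu(t)\rangle, \\
\langle w,\, (|t|H^{-}(t\tau'))^{1/2}P(\tau')u_{\tau'}(t)\rangle
  &\longrightarrow 0.
\end{align*}

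Pointwise convergence of the pairing against every $w \in D[H^{1/2}]$ immediately implies the $\sigma(\mathcal{H}, D[H^{1/2}])$-Cauchy property, and since convergent numerical sequences are bounded, $\sigma(\mathcal{H}, D[H^{1/2}])$-weak boundedness follows automatically; the $H^{-}$ family in addition converges to zero in this topology. The only genuinely delicate point, and the reason this proposition is phrased with the weaker $D[H^{1/2}]$-weak topology rather than the ordinary weak topology on $\mathcal{H}$, is that the imaginary part of \eqref{innerP} bounds only the \emph{difference} of the squared norms of the two $H^{\pm}$ families, never each norm on its own. Without a priori $\mathcal{H}$-norm control on $(|t|H^{\pm}(t\tau'))^{1/2}P(\tau')u_{\tau'}(t)$ separately, one cannot test against arbitrary vectors of $\mathcal{H}$; restricting to $w \in D[H^{1/2}]$ is precisely what lets us move the square-root operators onto a vector for which Lemma~\ref{l:conv} supplies a strong limit, and this is the subtlety around which the argument is organised.
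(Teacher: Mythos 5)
Your proposal is correct and follows essentially the same route as the paper: you move the bounded self-adjoint square roots onto the test vector $w\in D[H^{1/2}]$, invoke Lemma~\ref{l:conv} for the strong limits $(|t|H^{+}(t\tau'))^{1/2}w \to (|t|H)^{1/2}w$ and $(|t|H^{-}(t\tau'))^{1/2}w \to 0$, and combine these with the weak convergence $P(\tau')u_{\tau'}(t)\overset w\longrightarrow Pu(t)$ together with the uniform bound $\|P(\tau')u_{\tau'}(t)\|\le\|f\|$ to conclude. This is exactly the paper's argument, so nothing further is needed.
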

\begin{proof}
Take an arbitrary $\phi \in D[H^{1/2}]$. We use Lemma~\ref{l:conv} which states, in particular, that $(|t|H^+(t\tau))^{1/2} \phi \rightarrow (|t|H)^{1/2}\phi$ holds when $\tau \rightarrow 0+$. In combination with \eqref{weakconv}, this yields
\begin{align*}
\langle \phi, (|t|H^+(t\tau'))^{1/2} P(\tau')u_{\tau'}(t)\rangle
&= \langle (|t|H^+(t\tau'))^{1/2} \phi, P(\tau') u_{\tau'}(t)\rangle
\nonumber\\
&\longrightarrow \langle (|t|H)^{1/2}\phi,  Pu(t)\rangle
\end{align*}
as $\tau' \rightarrow 0+$. For the minus sign, on the other hand, Lemma~\ref{l:conv} says that $(|t|H^-(t\tau))^{1/2}\phi \rightarrow 0$, and this implies
$$
\langle \phi, (|t|H^-(t\tau'))^{1/2}P(\tau')u_{\tau'}(t)\rangle
 \longrightarrow 0,  \quad \tau' \rightarrow 0+,
$$
because the self-adjointness of $(|t|H^-(t\tau'))^{1/2}$ in combination with Schwarz inequality gives
\begin{align*}
|\langle (|t|H^-(t\tau'))^{1/2}\phi, P(\tau')u_{\tau'}(t) \rangle|
&\leq \|(|t|H^-(t\tau'))^{1/2}\phi\|\,\|P(\tau')u_{\tau'}(t) \| \\
&\leq \|(|t|H^-(t\tau'))^{1/2}\phi\|\,\|f\|\
\longrightarrow 0\,.
\end{align*}
This yields the stated assertion, including the fact that the family $\{(|t|H^-(t\tau'))^{1/2}P(\tau')u_{\tau'}(t)\}$ is $\sigma(\mathcal{H}, D[H^{1/2}])$-weakly convergent to zero.
\end{proof}

\medskip


On the other hand, it is not clear whether $Pu(t)$ belongs to $D[H^{1/2}]$, that is, whether the plus-sign family $\{(|t|H^{+}(t\tau))^{1/2} P(\tau)u_{\tau}(t)\}$ converges to $H^{1/2}Pu(t)$ in the $\sigma(\mathcal{H}, D[H^{1/2}])$ topology. What is important, the information we were able to deduce in this way about the convergence of the families $\{(|t|H^\pm(t\tau'))^{1/2} P(\tau')u_{\tau'}(t)\}$ is too limited; it does not seem possible to apply the same procedure as we used in Lemma~\ref{l:weaklimit} for $\{(|t|G(t\tau'))^{1/2} P(\tau')u_{\tau'}(t)\}$.

This forces us to seek a different strategy for the proof of Theorem~\ref{th:main'} that would allow us to identify the vector $u(t)$ with $(I +itH_P)^{-1}Pf$, in other words, to demonstrate relation \eqref{Chern} claiming that for every fixed $t\in {\mathbb R}$, the family $\{(I+S(it;\tau))^{-1}f\}$, or otherwise $\{u_{\tau}(t)\}$ in accordance with \eqref{def-of-utau}, converges to $(I+itH_P)^{-1}Pf$ in the Hilbert space norm as $\tau \to 0+$.

The argument is somewhat subtle and relies on our previous work \cite{EI05}, see also \cite{EINZ07}, about the Zeno product formul{\ae} related to
Theorems~\ref{th:main} and \ref{th:main'}. In those papers we demonstrated that the family $\{u_{\tau}(t)\}$ has a unique limit, namely
$(I+itH_P)^{-1}Pf$, as $\tau \to 0+$. As we mentioned in the introduction, however, the obtained convergence referred neither to the norm of ${\mathcal H}$ nor even to the weak topology. Precisely speaking, it is shown in \cite{EI05} that \eqref{Pboth}--\eqref{Pleft} and \eqref{P(t)both}--\eqref{P(t)left} hold in the topology of the Fr\'echet space $L^2_{\text{\rm loc}}({\mathbb R}; {\mathcal H}) = L^2_{\text{\rm loc}}({\mathbb R})\otimes {\mathcal H}$ of the ${\mathcal H}$-valued strongly measurable functions $v(\cdot)$ on ${\mathbb R}$ such that the $\|v(\cdot)\|$ are locally square integrable there, equipped with the topology induced by the family of semi-norms $v \mapsto \big(\int_a^b \|v(t)\|^2\, \mathrm{d}t\big)^{1/2}$ for any bounded interval $(a,b)$ with $a<b$. This follows from \cite[Lemma 3.1, p.~200]{EI05} which says that for every bounded closed interval $[a,b] \subset {\mathbb R}$ one has
\begin{equation} \label{L2locCov}
 \int_a^b \|u_{\tau}(t) - (I+ itH_P)^{-1}Pf\|^2\, \mathrm{d}t
   \; \longrightarrow \, 0 \quad\mathrm{as} \;\: \tau \to 0+\,;
\end{equation}
the reader may compare this result with \eqref{Chern}.

Note also that \eqref{L2locCov} implies that for every
$f \in {\mathcal H}$, there exist a set
$M_f \subset {\mathbb R}$ of Lebesgue measure zero, possibly dependent
on $f$, and a (sub)sequence $\{\tau'\}_{0<\tau' \leq 1}$ of
$\{\tau\}_{0<\tau \leq 1}$ along which it holds that for all $s \in
{\mathbb R}\setminus M_f$,
\begin{equation}\label{a.e.Cov}
  u_{\tau}(s) \longrightarrow  (I+ isH_P)^{-1}Pf
                   \quad\text{\rm in\;the\;norm\;of}\;\: {\mathcal H},
\end{equation}
in other words, $u_{\tau'}(s) \longrightarrow  (I+ isH_P)^{-1}Pf$ as $\tau'\to 0$. With the coming argument in mind, it is useful to note that the set $\mathbb {R} \setminus M_f$ at which the convergence takes place is dense in ${\mathbb R}$. Furthermore, since ${\mathcal H}$ is separable by assumption, we can choose a countable dense subset ${\mathcal D} := \{f_l\}_{l=1}^{\infty}$ in  ${\mathcal H}$. Putting $M = M_{\mathcal D} := \cup_{l=1}^{\infty} M_{f_l}$, which is also a set of Lebesgue measure zero, we may then say that \eqref{a.e.Cov} holds for all $s \in {\mathbb R}\setminus M$ and for every $f \in {\mathcal D}$, and hence, in view of the density, also for every $f \in {\mathcal H}$.

Moreover, we note that $s=0$ does not belong to $M_f$ for any $f \in {\mathcal H}$, and therefore it neither belongs to $M$. Indeed, using Lemma~\ref{l:inverse} and \eqref{inverse-of-IplusS} in combination with the continuity of $\tau \mapsto P(\tau)$ it is easy to see that
\begin{equation}\label{Cov-at-0}
 u_{\tau}(0) = (I+S(0;\tau))^{-1}f
             = (I+\tau^{-1})^{-1}(I-P(\tau)) \oplus P(\tau)f
             \longrightarrow Pf,
\end{equation}
which means $0 \notin M_f$.

\medskip

Now let us first briefly outline our plan of \emph{how to complete the proof of Theorem~\ref{th:main'}}; we recall that one has to verify relation \eqref{Chern} showing that for every $t\in {\mathbb R}$, $\{u_{\tau}(t)=(I+S(it;\tau))^{-1}f\}$ converges to $(I+itH_P)^{-1}Pf$ in the Hilbert space norm for all $t$ as $\tau \to 0+$. Let us recall here that, as already mentioned in the text following \eqref{Chern}, the use of Chernoff's theorem only requires to establish the convergence in \eqref{Chern} pointwise for all $t \in {\mathbb R}$. Here and in the following we keep in mind that $\{u_{\tau}(t)\}_{0<\tau\leq 1}$ is uniformly bounded in both $t$ and $0<\tau\leq 1$, i.e.
\begin{equation}  \label{utau:bdd-tau-t}
\sup_{0<\tau\leq 1}\sup_{t\in {\mathbb R}} \|u_{\tau}(t)\| \leq \|f\|,
\end{equation}
since $(I+S(it;\tau))^{-1}$ is a contraction. We need not strive to show its local uniformity, however, the following reasoning will establish this property with respect to $t \in {\mathbb R} \setminus \{0\} $. We will proceed in four steps demonstrating validity of the following claims:

\smallskip

I. The $\tau$-family $\{u_{\tau}(t)\}_{0<\tau \leq 1}$ of vectors
$u_{\tau}: {\mathbb R} \ni t \mapsto u_{\tau}(t) \in {\mathcal H}$ is
equicontinuous in $t \in {\mathbb R} \setminus \{0\}$ with respect to
the strong topology (i.e., Hilbert space norm) of ${\mathcal H}$.

\smallskip
II. The family $\{u_{\tau}(t)\}_{0<\tau\leq 1}$ converges as $\tau\to 0+$ for each fixed $t \in {\mathbb R}$ to some $u(t) \in{\mathcal H}$ in the weak topology of ${\mathcal H}$, and furthermore, the convergence is  \emph{even} locally uniform with respect to $t \in {\mathbb R} \setminus \{0\}$. The limit function $u(t)$ turns out to be continuous in $t \in {\mathbb R}$ in the weak topology of ${\mathcal H}$.

\smallskip
III. The limit satisfies
\begin{equation} \label{u(t)=}
 u(t) = (I+ itH_P)^{-1}Pf \quad \text{for\,\, all}\,\, t.
\end{equation}

\smallskip
IV. Finally, the family $\{u_{\tau}(t)\}_{0<\tau\leq 1}$ converges as $\tau\to 0+$ for any fixed $t \in {\mathbb R}$ to $u(t) \equiv (I+ itH_P)^{-1}Pf$ in the strong topology of ${\mathcal H}$, and furthermore, the convergence is \emph{even} locally uniform with respect to $t \in {\mathbb R} \setminus \{0\}$.

\medskip

\emph{Step I.} The claim is expressed in the following lemma about the
local equicontinuity of the $\tau$-family $\{u_{\tau}(t)\}$
in a neighbourhood of $t=s$ in ${\mathbb R}\setminus \{0\}$. It plays
a crucial role in concluding the proof of Theorem~\ref{th:main'}.
\begin{lemma} \label{l:equiconti}
Let $f \in {\mathcal H}$.
Then the $\tau$-family $\{u_{\tau}(t)\}_{0<\tau \leq 1}$ of vectors
$u_{\tau}: {\mathbb R}\setminus \{0\} \ni t \mapsto u_{\tau}(t) \in {\mathcal H}$
is equicontinuous locally in $t$ with respect to the strong topology
on ${\mathcal H}$. More explicitly, for every $\varepsilon>0$ and for
every $s \in {\mathbb R}\setminus \{0\}$ there exists an
$s$-dependent constant $\delta=\delta(f;\varepsilon;s)>0$ such that if
$t,s > 0$ or $\:t,s< 0$ with $|t-s|<\delta$, then
$\|u_{\tau}(t)-u_{\tau}(s)\| < \varepsilon$ holds for all $0<\tau \leq 1$.
\end{lemma}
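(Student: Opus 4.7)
The plan is to reduce the desired equicontinuity to a spectral estimate for the measure $\mu_s(\lambda):=\|E(\lambda)v_\tau(s)\|^2$ associated with $v_\tau(s):=P(\tau)u_\tau(s)$. By the decomposition \eqref{inverse-of-IplusS}, the component $(I-P(\tau))u_\tau(t)=(1+\tau^{-1})^{-1}(I-P(\tau))f$ is independent of $t$, hence $u_\tau(t)-u_\tau(s)=v_\tau(t)-v_\tau(s)$. Writing $A(t):=P(\tau)(I+tG(t\tau)+itH(t\tau))P(\tau)$ for the $P(\tau)\mathcal H$-restriction of $I+S(it;\tau)$ in \eqref{IplusS} and combining the resolvent identity $v_\tau(t)-v_\tau(s)=-A(t)^{-1}(A(t)-A(s))v_\tau(s)$ with the trigonometric identity
\[
A(t)-A(s) \;=\; 2i\tau^{-1}P(\tau)\,e^{-i(t+s)\tau H/2}\sin\bigl(\tfrac{(t-s)\tau H}{2}\bigr)P(\tau),
\]
and the contractivity $\|A(t)^{-1}\|\le 1$ on $P(\tau)\mathcal H$ (which follows from $\mathrm{Re}\langle w,A(t)w\rangle\ge\|P(\tau)w\|^2$ for $t>0$), I arrive at the master estimate
\[
\|u_\tau(t)-u_\tau(s)\| \;\le\; 2\tau^{-1}\bigl\|\sin\bigl(\tfrac{(t-s)\tau H}{2}\bigr)v_\tau(s)\bigr\| \;=\; 2\tau^{-1}\Bigl(\int_0^\infty \sin^2\bigl(\tfrac{(t-s)\tau\lambda}{2}\bigr)\,d\mu_s(\lambda)\Bigr)^{1/2}.
\]

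The two $\tau$-uniform a priori bounds at hand are $\mu_s([0,\infty))\le\|f\|^2$ and, from the real part of \eqref{innerP},
\[
\int_0^\infty \frac{2\sin^2(s\tau\lambda/2)}{\tau}\,d\mu_s(\lambda) \;=\; \|(sG(s\tau))^{1/2}v_\tau(s)\|^2 \;\le\; \|f\|^2.
\]
My strategy would be to split $[0,\infty)$ at an $s$-dependent cutoff $N$: on the low-frequency piece $\lambda\le N$ use the pointwise inequality $|\sin((t-s)\tau\lambda/2)|\le|t-s|\tau\lambda/2$ to convert the spectral integral into a quadratic moment that can be controlled via the second a priori bound above, while on the high-frequency tail $\lambda>N$ use the trivial $\sin^2\le 1$ together with the smallness of the $\mu_s$-mass at large $\lambda$, again read off from that bound.

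The principal obstacle, and the reason a whole section is devoted to the proof, is the mismatch between the two $\sin^2$ factors: the weight $\sin^2(s\tau\lambda/2)$ controlling $\mu_s$ vanishes at the countably many points $\lambda=2\pi k/(s\tau)$, so it does not pointwise dominate $\sin^2((t-s)\tau\lambda/2)$, and the raw quadratic moment $\int\lambda^2\,d\mu_s$ is in general not bounded uniformly in $\tau$. Overcoming this requires a careful partition of each period $[2\pi k/(s\tau),2\pi(k+1)/(s\tau))$ into an \emph{active} subset on which $\sin^2(s\tau\lambda/2)$ is bounded below by a fixed positive constant (yielding a $\mu_s$-mass bound of order $\tau$ on it) and small neighborhoods of the zeros of $\sin(s\tau\lambda/2)$ on which $\sin^2((t-s)\tau\lambda/2)$ is itself small as long as $|t-s|\ll|s|$, followed by summation over $k$ in a way that preserves the $\tau$-uniformity. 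The hypothesis $s\neq 0$ is genuinely needed here, since the partition scale is $1/(s\tau)$ and the constants deteriorate as $s\to 0$, consistently with the blow-up of $\partial_t(I+itH_P)^{-1}$ at $t=0$ that ultimately forces the exclusion of $s=0$ from the equicontinuity claim.
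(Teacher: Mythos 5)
Your reduction at the start is sound and in fact parallels the paper's: the $(I-P(\tau))$-component of $u_\tau(t)$ is $t$-independent, the difference is handled through the second resolvent identity on $P(\tau)\mathcal H$, and the trigonometric identity you write for $A(t)-A(s)$ is exactly the mechanism behind \eqref{tH(t)-sH(s)}--\eqref{tG(t)-sG(s)}. The gap comes immediately afterwards, when you discard $A(t)^{-1}$ by its contraction bound and try to close the argument from the two a priori estimates $\mu_s([0,\infty))\le\|f\|^2$ and $\int_0^\infty \tfrac{2}{\tau}\sin^2(\tfrac{s\tau\lambda}{2})\,d\mu_s(\lambda)\le\|f\|^2$. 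These are quantitatively too weak for your master estimate. Squaring it, you must control $\tfrac{4}{\tau^2}\int \sin^2(\tfrac{(t-s)\tau\lambda}{2})\,d\mu_s(\lambda)$, and the dangerous regime is $\lambda\sim \tfrac{1}{|t-s|\tau}$, where $\sin^2(\tfrac{(t-s)\tau\lambda}{2})$ is of order one. Your weighted bound gives, on the union of the ``active'' sets where $\sin^2(\tfrac{s\tau\lambda}{2})\ge c$, only a mass bound of order $\tau$, so that piece contributes at most $\tfrac{4}{\tau^2}\cdot O(\tau)=O(1/\tau)$, which is not uniformly bounded in $\tau$, let alone small; and on the neighborhoods of the zeros $\lambda=2\pi k/(s\tau)$ the mass is completely uncontrolled while your claim that $\sin^2(\tfrac{(t-s)\tau\lambda}{2})$ is small there fails precisely for $k$ of order $s/(t-s)$, i.e.\ again $\lambda\sim\tfrac{1}{|t-s|\tau}$. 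So the partition trick you sketch does not overcome the obstacle you correctly identified: with only those two bounds, the high-frequency tail cannot be made small uniformly in $\tau\in(0,1]$, because dropping $A(t)^{-1}$ throws away exactly the high-frequency decay that makes the lemma true.

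What the paper does instead is to keep that decay in a usable form: it factorizes $D(t,s;\tau)=T_1(t;\tau)\,T_2(t,s;\tau)\,T_3(s;\tau)$ as in \eqref{split-D(t,s;tau)to123}, inserting the regularizing resolvent $(I+|s|H_\tau)^{-1}$, $H_\tau=H(I+\tau H)^{-1}$, into the middle factor \eqref{T2tau}; then $|\phi(t,s;\tau,\lambda)|$ in \eqref{bound-phi} is bounded by roughly $2/|s|$ at high frequencies and by $|t-s|\lambda_0$ at low ones, which is where your two-regime splitting genuinely works. The price is that the compensating factor $(I+|s|H_\tau)$ must be absorbed into $T_3(s;\tau)=(I+|s|H_\tau)P(\tau)[P(\tau)(I+sK(s\tau))P(\tau)]^{-1}$, i.e.\ one needs the much stronger a priori estimate $\|(I+|s|H_\tau)P(\tau)u_\tau(s)\|\le C(s)\|f\|$ uniformly in $\tau$. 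This is the deep part of the proof and requires ingredients absent from your proposal: the a.e.\ (in $s$) norm convergence $u_\tau(s)\to(I+isH_P)^{-1}Pf$ inherited from the $L^2_{\rm loc}$ result of \cite{EI05} (hence the restriction to $s\notin M\cup\{0\}$, removed afterwards by a triangle-inequality/density argument, and the separability of $\mathcal H$), the strong graph limit criterion (Proposition~\ref{p:graphlim}), the operator-theoretic Lemma~\ref{l:T0} showing $(I+|s|H)[(I+|s|H_P)^{-1}P]$ extends to a contraction, Banach--Steinhaus, and finally a total-boundedness argument on $V_f=\{T_3(s;\tau)f\}$ to choose the spectral cutoff uniformly in $\tau$. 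Without these (or a substitute delivering a $\tau$-uniform bound on $(I+|s|H_\tau)P(\tau)u_\tau(s)$), your proposed proof cannot be completed.
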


\smallskip

We postpone for the moment the proof of Lemma~\ref{l:equiconti}, returning
to it in Sect.~\ref{s: proof-lemma}, and accept its claim, to finish
first the proof of Theorem~\ref{th:main'}.

\medskip

\emph{Step II.}
Without loss of generality we may suppose $f \not= 0$. If $t=0$, the family $\{u_{\tau}(0)\}_{0<\tau\leq 1}$ converges strongly to $Pf$ as $\tau\to 0+\,$ as mentioned above, cf. \eqref{Cov-at-0}. Consider thus a nonzero $t \in {\mathbb R}\setminus \{0\}$ and take \emph{arbitrary} (sub)se\-quen\-ce $\{\tau'\}_{0<\tau' \leq 1}$ of $\{\tau\}_{0<\tau \leq 1}$ with $\tau'\to 0+$. By Lemma~\ref{l:equiconti}, to be yet proven, we see the $\tau'$-family $\{u_{\tau'}(t)\}_{0<\tau' \leq 1}$ is equicontinuous in the strong topology and therefore in the weak topology, because the `full' $\tau$-family $\{u_{\tau}(t)\}_{0<\tau\leq 1}$ is. We observed in \eqref{utau:bdd-tau-t} that the vectors $u_{\tau}(t)$ with any $t \in {\mathbb R}\setminus \{0\}$ and $\tau\in(0,1]$ lie in the closed ball $\bar{B}(0;\|f\|) \subset {\mathcal H}$ with the center at the origin and radius $\|f\|$, which is weakly compact. To proceed, note that $\bar{B}(0;\|f\|)$ is {\it metrizable in the weak topology}, since ${\mathcal H}$ is separable by assumption, see e.g. \cite[Problem/Solution 18, p.~12 and 181]{Ha67}. Thus the equicontinuity holds with respect to the metric on the space $\bar{B}(0;\|f\|)$ equivalent to the weak topology on it. Then, by virtue of the Ascoli--Arzel\`a theorem,  see e.g. \cite[p.~81]{KeNa76} or \cite[Thm.~1.5.3]{Si15}, there exists a (sub)sequence $\{\tau^{\prime\prime}\}_{0<\tau^{\prime\prime}\leq 1}$ of $\{\tau'\}_{0<\tau' \leq 1}$ with $\tau^{\prime\prime} \to 0+$, along which $\{u_{\tau}(t)\}$ converges \emph{weakly} to some limit $u(t)$, and moreover, the convergence is \emph{(locally) uniform} in $t \in {\mathbb R}\setminus \{0\}$. This means that the numerical family $\{\langle \psi, u_{\tau^{\prime\prime}}(t) \rangle\}$ converges for every fixed $\psi \in {\mathcal H}$ as $\tau^{\prime\prime} \to 0+$, (locally) uniformly in $t \in {\mathbb R}\setminus \{0\}$. Putting this together with the case $t=0$ mentioned above, we see that the limit $\langle \psi, u(t)\rangle$ is $t$-continuous everywhere in ${\mathbb R}$, in other words, that $u(t)$ is $t$-continuous everywhere in ${\mathbb R}$ with respect to the weak topology of ${\mathcal H}$.

\medskip

\emph{Step III.} We have established above two ways of convergence of the original $\tau$-family $\{u_{\tau}(t)\}_{0<\tau \leq 1}$ of ${\mathcal H}$-valued $t$-continuous functions as $\tau\to 0+$, or more specifically, convergence with respect to two different topologies. One is the convergence to $u(t)$ in the weak topology of ${\mathcal H}$, (locally) uniformly for $t \in {\mathbb R}\setminus \{0\}$, and the other is the convergence to $(I+itH_P)^{-1}Pf$ in $L^2_{\text{\rm loc}}({\mathbb R}; {\mathcal H})$ in the strong, and therefore also weak sense. This allows us to conclude that these two limit vectors coincide for all $t$, in other words, to establish the equality \eqref{u(t)=}. Indeed, for any $\psi \in {\mathcal H}$ and an arbitrary bounded interval $(a,b)$ with $a<b$ we have the following elementary estimate,
\begin{align*}
&\int_a^b |\langle \psi, u(t)- (I+itH_P)^{-1}Pf\rangle|\, \mathrm{d}t \\
&\leq \int_a^b |\langle \psi, u(t)- u_{\tau}(t)\rangle|\, \mathrm{d}t
  +\int_a^b |\langle \psi, u_{\tau}(t)- (I+itH_P)^{-1}Pf\rangle|\, \mathrm{d}t\,,
\end{align*}
the last term of which tends to zero as $\tau\to 0+$ by \eqref{L2locCov}. At the same time, the first term on the right-hand side also tends to zero in view of the weak convergence and the dominated convergence theorem. Consequently the integrated expression on the left-hand side vanishes identically,
$$
 \langle \psi, u(t)- (I+itH_P)^{-1}Pf\rangle = 0,
$$
for all $t$ and for all $\psi \in {\mathcal H}$, which yields the desired claim \eqref{u(t)=}.
In this way, we have obtained the relations
\begin{align}
 &u(t)  =Pu(t)  \in D[H_P] = D[(H^{1/2} P)^*(H^{1/2} P)]\,,\nonumber\\[-.6em]
                                                 \label{limits2}\\[-.6em]
 &Pf= P(I+ itH_P)u(t) = P[I+it(H^{1/2} P)^*(H^{1/2} P)]u(t)\,, \nonumber
\end{align}
which show that $\{u_{\tau}(t)\}$ converges to $u(t)= (I+ itH_P)^{-1}Pf$ in the weak topology of ${\mathcal H}$
along $\tau \to 0+$, and therefore also along $\tau\to 0+$.

\medskip

\emph{Step IV.} It remains to demonstrate that $\{u_{\tau}(t)\}$ converges to $u(t)= (I+ itH_P)^{-1}Pf$ also \emph{in the Hilbert space norm} for all $t \in {\mathbb R}$. For the sake of completeness we shall show that the same claim can also be made about the $\tau$-family $\{(tG(\tau))^{1/2}P(\tau)u_{\tau}(t)\}$ in \eqref{innerP}; note that the analogous question concerning $\{(|t|H^{\pm}(t\tau))^{1/2}P(\tau) u_{\tau}(t)\}$ also appearing in \eqref{innerP} is more complicated and we are able to provide only a partial answer to it, cf. Proposition~\ref{p:sigmaweak-bdd}.

Since we have already established the weak convergence of the $\tau$-family
$\{u_{\tau}(t)\}$ for all $t$, we need only to show that the $\tau$-families
of the norms of these vectors converge. To this end, we observe again the real
part of \eqref{innerP}, however, replacing now $f$ by $P(\tau)f$ in the inner
product; we write also the imaginary part for purpose of a further discussion.
Writing the left-hand side of \eqref{innerP} as
$\langle u_{\tau}(t),P(\tau)f \rangle$, we obtain for its real and imaginary
part
\begin{align}
\mathrm{Re}\,\langle u_{\tau}(t),P(\tau)f \rangle
&= \|P(\tau)u_{\tau}(t)\|^2 +\|(|t|G(t\tau))^{1/2}P(\tau)u_{\tau}(t)\|^2, \label{real} \\[.5em]
\mathrm{Im}\,\langle u_{\tau}(t),P(\tau)f \rangle
&=  \|(|t|H^+(t\tau))^{1/2}P(\tau)u_{\tau}(t)\|^2 \nonumber\\
    &\qquad\qquad\qquad
                     - \|(|t|H^-(t\tau))^{1/2}P(\tau)u_{\tau}(t)\|^2.
                                                             \label{imaginary}
\end{align}
In view of \eqref{limits2}, the continuity of the projection family $\{P(\tau)\}$, and the weak convergence of $\{u_{\tau}(t)\}$ as $\tau\rightarrow 0+$, which we have already established, the left-hand sides of the last two relations converge, leading to the following limits
\begin{align}
\text{\rm Re}\,\langle u_{\tau}(t),P(\tau)f \rangle
 &\longrightarrow \langle u(t),Pu(t) \rangle = \|Pu(t)\|^2.
                                             \label{real2} \\[.5em]
\text{\rm Im}\,\langle u_{\tau}(t),P(\tau)f \rangle
 &\longrightarrow \langle u(t),tH_Pu(t) \rangle
                     =  \|(|t|H)^{1/2}Pu(t)\|^2\label{imaginary2}
\end{align}
valid for \emph{all} $t$. Comparing now the right-hand sides of relations \eqref{real} and \eqref{real2} we get
\begin{align*}
 \|Pu(t)\|^2 &= \lim_{\tau \to 0+}\, \big[\|P(\tau)u_{\tau}(t)\|^2
         +\|(|t|G(t\tau))^{1/2}P(\tau)u_{\tau}(t)\|^2 \big]\\
 &= \liminf_{\tau \to 0+}\, \big[\|P(\tau)u_{\tau}(t)\|^2
         + \|(|t|G(t\tau))^{1/2}P(\tau)u_{\tau}(t)\|^2 \big] \\
 &\geq \liminf_{\tau \to 0+}\, \|P(\tau)u_{\tau}(t)\|^2
         +\liminf_{\tau \to 0+}\,  \|(|t|G(\tau))^{1/2}P(\tau)u_{\tau}(t)\|^2 \\
 &\geq \|Pu(t)\|^2 + \|0\|^2 =  \|Pu(t)\|^2,
\end{align*}
which means that the two terms on the right-hand side of \eqref{real} converge
to $\|Pu(t)\|^2$ and to zero, respectively, yielding thus the sought
convergence in the Hilbert space norm along the sequence $\tau \to 0+$,
\begin{equation} \label{strongconv}
  P(\tau)u_{\tau}(t) \longrightarrow Pu(t)\,, \quad
  (|t|G(t\tau))^{1/2}P(\tau)u_{\tau}(t) \longrightarrow 0\,.
\end{equation}
The first convergence in \eqref{strongconv} also implies
$u_{\tau}(t) \rightarrow u(t) = Pu(t)$, i.e. convergence of
$\tau$-family$\{u_{\tau}(t)\}$ to $u(t)=Pu(t)$ in the Hilbert space
norm, because we know already that $(I-P(\tau'))u _{\tau'}(t) \to
0$.

This shows  nothing but our sought statement of the convergence \eqref{Chern} for the family $\{u_{\tau}(t)\}$ which, with the reference to Chernoff's criterion \cite{Ch68, Ch74}, concludes the proof of Theorem~\ref{th:main'}.
\qed

\medskip
The proof of Lemma~\ref{l:equiconti} will be given in the next section.


\section{Proof of Lemma~\ref{l:equiconti}}
\label{s: proof-lemma}

Recall first that the Hilbert space ${\mathcal H}$ we are handling is assumed to be separable which is the property we needed to construct the exceptional set $M$, cf. the text following \eqref{a.e.Cov}. Recall also
the notation we introduced in \eqref{def-of-utau} for the vector obtained by application of the operator $(I+ S(it;\tau))^{-1}$ to an arbitrary $f \in {\mathcal H}$,
\begin{align}
u_{\tau}&(t)
 = (I+ S(it;\tau))^{-1}f             \nonumber\\
&= \big\{(1+\tau^{-1})^{-1}(I-P(\tau))
 \oplus
  [P(\tau)(I+ tG(t\tau) + itH(t\tau))P(\tau)]^{-1}\big\}f \nonumber\\
&=\big\{(1+\tau^{-1})^{-1}(I-P(\tau))\oplus T^P(t;\tau)\big\}f\,,
\end{align}
where the operator
\begin{equation} \label{def-TP(t;tau)}
T^P(t;\tau) := [P(\tau)(I+ tG(t\tau) +itH(t\tau))P(\tau)]^{-1}
\end{equation}
may be also considered on the whole Hilbert space ${\mathcal H}$,
although in the proper sense it is an operator with the domain and range
included in the closed subspace $P(\tau){\mathcal H}$; one may regard it
as vanishing on the orthogonal complement $(I-P(\tau)){\mathcal H}$.

Next, for fixed $t,s >0$ or $t,s <0$, we put
\begin{align} \label{D(t,s;tau)}
D(t,s;\tau)f
:&= u_{\tau}(t) - u_{\tau}(s)                         \nonumber\\
&= [(I+ S(it;\tau))^{-1}- (I+ S(is;\tau))^{-1}]f      \nonumber\\
&= (I+ \tau^{-1})^{-1}(I-P(\tau))
      \oplus \big[T^P(t;\tau) -T^P(s;\tau)\big]f\,.
\end{align}
With the above direct sum decomposition in mind, one may for a fixed $\tau$ deal with the operator on the subspace $P(\tau){\mathcal H}$,
\begin{equation} \label{TP}
  T^P(t,s;\tau) := D(t,s;\tau)\restriction_{P(\tau){\mathcal H}}
                 = T^P(t;\tau) -T^P(s;\tau)\,,
\end{equation}
however, since $\tau$ is varying, we have to consider it on the whole space ${\mathcal H}$. Note that the $\tau$-family $\{D(t,s;\tau)\}_{0<\tau \leq 1}$ is
strongly continuous in $0<\tau \leq 1$, and uniformly bounded, i.e. $\|D(t,s;\tau)\| \leq 2$,  because both $(I+ S(it;\tau))^{-1}$ and $(I+ S(is;\tau))^{-1}$ are contractions. Note also that we have $ D(t,s;\tau) = P(\tau)D(t,s;\tau)$.

\medskip

To verify the assertion of Lemma~\ref{l:equiconti}, we need to show that for any fixed $f \in {\mathcal H}$, the difference $D(t,s;\tau)f$ in \eqref{D(t,s;tau)} with $t,s \in {\mathbb R} \setminus \{0\}$ converges to zero in the Hilbert space norm as $|t-s| \to 0$, and that the convergence is uniform with respect to $\tau\in(0,1]$.

We use a small trick showing first that it is sufficient to establish the claim of Lemma~\ref{l:equiconti} \emph{under the additional assumption} that one of the $t$ and $s$, say the latter, belongs to ${\mathbb R} \setminus (M\cup \{0\})$. Indeed, if this is the case, i.e. if for any fixed $s\in{\mathbb R} \setminus (M\cup \{0\})$ and arbitrary $f \in {\mathcal H},\;\varepsilon>0$, there is a
$\delta =\delta(f;\varepsilon;s)>0$ such that
$$
 \|u_{\tau}(t) -u_{\tau}(s)\| < \tfrac{\varepsilon}2 \;\;
\text{\rm holds for all}\;\; t \in (s-\delta,s+\delta)\;\; \text{\rm and}\;\; \tau\in (0,1],
$$
the lemma is valid in the general case as well. To see that, we take any two points $t_1,\,t_2  \in {\mathbb R} \setminus \{0\}$ in the vicinity of the chosen $s$ satisfying $|t_i-s|<\tfrac{\delta}2$ for $i=1,2$, then
\begin{align*}
&|t_1-t_2| \leq |t_1-s|+|s-t_2| < \tfrac{\delta}2+ \tfrac{\delta}2=\delta, \\
&\|u_{\tau}(t_1) -u_{\tau}(t_2)\|
\leq \|u_{\tau}(t_1) -u_{\tau}(s)\| +\|u_{\tau}(s) -u_{\tau}(t_2)\|
< \tfrac{\varepsilon}2 +\tfrac{\varepsilon}2 = \varepsilon
\end{align*}
holds independently of $\tau$. This yields the `full' claim of Lemma~\ref{l:equiconti} in view of the fact that the set ${\mathbb R} \setminus (M\cup \{0\})$ from which the number $s$ is chosen is dense in ${\mathbb R} \setminus \{0\}$.

\smallskip

Let us thus turn to the nontrivial part of the proof which consists of establishing the assertion of Lemma~\ref{l:equiconti} for $s \in {\mathbb R} \setminus (M\cup \{0\})$ and $t \in {\mathbb R} \setminus \{0\}$; the aim is to complete the proof of our main result in the forthcoming Lemma~\ref{l:estimate-T}(ii). We will work out the argument assuming that $t,s > 0$, the case with the opposite signs is completely analogous.

To begin with, we use functional calculus with \eqref{KGH}, \eqref{KGH2}
to rewrite the differences containing $H(t\tau)$ and $G(t\tau)$
in \eqref{D(t,s;tau)} as
\begin{align} %
\hspace{-1em}tH(t\tau) - sH(s\tau)
&= \tfrac{\sin t\tau H - \sin s\tau H}{\tau}
 = \tfrac{2}{\tau} \cos(\tfrac{t+s}2\tau H)\, \sin(\tfrac{t-s}2\tau H),
                        \label{tH(t)-sH(s)}\\
tG(t\tau) - sG(s\tau)
&= \tfrac{-\cos t\tau H + \cos s\tau H}{\tau}
= \tfrac{2}{\tau} \sin(\tfrac{t+s}2\tau H)\, \sin(\tfrac{t-s}2\tau H).
                       \label{tG(t)-sG(s)}
\end{align}
To simplify expressions in the discussion to follow, we recall the quantity
$K(\kappa)$ introduced in \eqref{KGH} which for $\kappa = t\tau$ allows us
to write
\begin{equation}
I + tK(t\tau) = I + tG(t\tau) +itH(s\tau)
= I + \tfrac{I-\cos t\tau H}{\tau} + i\tfrac{\sin t\tau H}{\tau}\,,
                                                  \label{KGH3}
\end{equation}
and similarly for $k=s\tau$. Furthermore, for a given $\tau\in (0,1]$ we
intro\-duce the self-adjoint operator
\begin{equation} \label{Htau}
     H_{\tau} := H(I +\tau H)^{-1}
\end{equation}
which is positive and bounded, and note that $I+|s|H_\tau$ has a bounded
inverse for any $s\in\mathbb{R}$. The difference $D(t,s;\tau)$ in
\eqref{D(t,s;tau)} which can be identified with its nontrivial part
$T^P(t,s;\tau)$ in \eqref{TP} can be then rewritten as
\begin{align}
D(t,s;\tau) &= T^P(t,s;\tau)                                  \nonumber\\
&= [P(\tau)(I+ tK(t\tau))P(\tau)]^{-1}
        - [P(\tau)(I+ sK(s\tau))P(\tau)]^{-1}               \nonumber\\
&=  [P(\tau)(I+ tK(t\tau))P(\tau)]^{-1}                     \nonumber\\
  &\qquad \cdot \big\{P(\tau)(I+ sK(s\tau))P(\tau)
        -P(\tau)(I+ tK(t\tau))P(\tau)\big\}                 \nonumber\\
  &\quad\qquad \cdot [P(\tau)(I+ sK(s\tau))P(\tau)]^{-1}          \nonumber\\
&= [P(\tau)(I+ tK(t\tau))P(\tau)]^{-1}
    [P(\tau)(sK(s\tau) -tK(t\tau))P(\tau)]                  \nonumber\\
  &\qquad \cdot [P(\tau)(I+ sK(s\tau))P(\tau)]^{-1}          \nonumber\\
&= [P(\tau)(I+ tK(t\tau))P(\tau)]^{-1}                       \nonumber\\
  &\qquad \cdot [(sK(s\tau) -tK(t\tau))(I+ |s|H_{\tau})^{-1}]  \nonumber\\
  &\quad\qquad \cdot (I+ |s|H_{\tau}) P(\tau)[P(\tau)(I+ sK(s\tau))P(\tau)]^{-1}                                                                \nonumber\\
&=: T_1(t;\tau)\, T_2(t,s;\tau)\, T_3(s;\tau),
                                         \label{split-D(t,s;tau)to123}
\end{align}
where we have introduced
\begin{align}
&T_1(t;\tau) = T^P(t;\tau) = [P(\tau)(I+ tK(t\tau))P(\tau)]^{-1},
                                                           \label{T1tau} \\
&T_2(t,s;\tau) = (sK(s\tau) -tK(t\tau))(I+ |s|H_{\tau})^{-1},\label{T2tau} \\
&T_3(s;\tau)
 =(I+ |s|H_{\tau})P(\tau)[P(\tau)(I+ sK(s,\tau))P(\tau)]^{-1},\label{T3tau}
\end{align}
which are all bounded operators on $P(\tau){\mathcal H}$ as well as on ${\mathcal H}$; in \eqref{T2tau} we are able to drop the projections $P(\tau)$ appearing in the last formula since the operator \eqref{def-TP(t;tau)} maps the subspace $P(\tau)\mathcal{H}$ onto itself.

Next we consider the following two operator-valued functions, which may be thought of as the $\tau$-limits of the families $\{T_1(t;\tau)\}_{0<\tau\leq 1}\,$ in \eqref{T1tau}, and $\{T_3(s;\tau)\}_{0<\tau\leq 1}\,$ in \eqref{T3tau}, namely
\begin{align}
T_1(t) :&= [(I+ itH_P)^{-1}P],            \label{T1} \\
T_3(s) :&= (I+ |s|H)P[(I+ isH_P)^{-1}P]
         = (I+ |s|H)[(I+ isH_P)^{-1}P],          \label{T3}
\end{align}
where we can remove again one $P$ from the second expression of \eqref{T3} since $(I+ isH_P)^{-1}$ maps $P{\mathcal H}$ to itself. We already know from \eqref{strongconv} that $T_1(t)$ is the strong limit as $\tau\to 0+$ of the family $\{T_1(t;\tau)\}$ of contractions, for the moment at least as long as $t \in {\mathbb R}\setminus (M \cup \{0\})$.

Next we are going to show that $T_3(s)$ can be extended to a bounded operator on ${\mathcal H}$. We begin with a crucial observation.

\begin{lemma} \label{l:T0}
Let $H$ be our nonnegative self-adjoint operator acting in ${\mathcal H}$ and $H_P$ the self-adjoint operator introduced in Sect. 2 referring to the orthogonal projection $P$. Consider the operator
\begin{equation} \label{def-T0}
       T_0 := (I+H)[(I+ H_P)^{-1}P]
\end{equation}
in ${\mathcal H}$ for which we have: \text{\rm (i)} the domain and range of $T_0$ are
\begin{align}
 D[T_0] &= P(I+H_P) D[HP]
         = P(I+H_P) (D[H]\cap P{\mathcal H}) \oplus (P{\mathcal H})^{\perp}, \nonumber \\
 R[T_0] &= (I+H) D[HP]
         = (I+H) (D[H]\cap P{\mathcal H}),
\end{align}
as $T_0$ is the direct sum, $T_0=T_0\!\restriction_{P{\mathcal H}} \oplus\, 0\,$, in accordance with \eqref{HP}, \\
\text{\rm (ii)} and, in addition,
\begin{equation} \label{T0-eq}
  T_0\, g =g\,, \quad g \in D[T_0\!\restriction_{P{\mathcal H}}],
\end{equation}
thus $T_0$ can be extended to a bounded operator $\tilde{T}_0$ on ${\mathcal H}$ such that
\begin{align} \label{T0}
&\text{\it the closure of \,$T_0\!\restriction_{P{\mathcal H}}$\,
   is the identity operator\,$I_{P{\mathcal H}}$\,\,on\,} P{\mathcal H}, \nonumber\\
&\text{\it and}\,\,\tilde{T_0} = I_{P{\mathcal H}} \oplus 0 \,\,\,
 \text{\rm on }\,\, {\mathcal H}
              = P{\mathcal H} \oplus (P{\mathcal H})^{\perp}. \end{align}
In particular, $\tilde{T_0}$ is a contraction.
\end{lemma}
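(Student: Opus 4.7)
The plan is to combine the direct-sum structure of $H_P$ supplied by Proposition~\ref{p:H_P} with a concrete identification of the action of $H_P$ on its $P\mathcal{H}$-domain in terms of $PH$; this identification is what turns the product $T_0=(I+H)(I+H_P)^{-1}P$ into (the direct-sum extension of) the identity on $P\mathcal{H}$.

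For part~(i), I reduce to the nontrivial $P\mathcal{H}$-component. Proposition~\ref{p:H_P} yields $(I+H_P)^{-1}=(I+H_P\!\restriction\!_{P\mathcal{H}})^{-1}\oplus I\!\restriction\!_{(I-P)\mathcal{H}}$, so that for any $f\in\mathcal{H}$ one has $(I+H_P)^{-1}Pf\in D[H_P\!\restriction\!_{P\mathcal{H}}]$, which by Proposition~\ref{p:DHP<DH} lies inside $D[H]\cap P\mathcal{H}$. Hence $T_0$ is defined on all of $\mathcal{H}$, vanishes identically on $(P\mathcal{H})^{\perp}$, and takes the direct-sum form $T_0=T_0\!\restriction\!_{P\mathcal{H}}\oplus 0$ in accordance with \eqref{HP}. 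The fact that $I+H_P\!\restriction\!_{P\mathcal{H}}$ is a bijection of $D[H]\cap P\mathcal{H}$ onto $P\mathcal{H}$ then gives the stated descriptions of $D[T_0]$ and $R[T_0]$.

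The crux of part~(ii) is the operator identity $H_P\xi=PH\xi$ for every $\xi\in D[H_P\!\restriction\!_{P\mathcal{H}}]$. Since by Proposition~\ref{p:DHP<DH} we have $\xi=P\xi\in D[H]$, it follows that $(H^{1/2}P)\xi=H^{1/2}\xi$ with $H^{1/2}\xi\in D[H^{1/2}]=D[PH^{1/2}]$, so the inclusion $(H^{1/2}P)^{*}\supset PH^{1/2}$ recorded in Section~\ref{s: limitop} yields
\begin{equation*}
H_P\xi=(H^{1/2}P)^{*}(H^{1/2}P)\xi=(H^{1/2}P)^{*}H^{1/2}\xi=PH^{1/2}(H^{1/2}\xi)=PH\xi.
\end{equation*}
Given $g\in D[T_0\!\restriction\!_{P\mathcal{H}}]$, write $g=(I+H_P)\xi$ with $\xi\in D[H]\cap P\mathcal{H}$. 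The identity above gives $g=\xi+PH\xi$, whereas $T_0 g=(I+H)\xi=\xi+H\xi=g+(I-P)H\xi$, so the $P\mathcal{H}$-component of $T_0 g$ is exactly $g$; in the direct-sum picture $T_0=T_0\!\restriction\!_{P\mathcal{H}}\oplus 0$ this is the equality $T_0 g=g$. Consequently $T_0\!\restriction\!_{P\mathcal{H}}$ agrees with $I_{P\mathcal{H}}$ on all of $P\mathcal{H}$, and $T_0$ admits the bounded extension $\tilde{T}_0=I_{P\mathcal{H}}\oplus 0=P$, which is a contraction.

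The main obstacle is the operator identity $H_P=PH$ on $D[H_P\!\restriction\!_{P\mathcal{H}}]$: it rests on the operator-level inclusion in Proposition~\ref{p:DHP<DH} (so that $H\xi$ is well defined) and on the adjoint inclusion $(H^{1/2}P)^{*}\supset PH^{1/2}$. Once this identity is in hand, the remaining work is routine bookkeeping within the direct-sum decomposition \eqref{decomp-HP}.
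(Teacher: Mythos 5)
Your proposal contains two genuine gaps. The first is in part (i): everything there rests on reading Proposition~\ref{p:DHP<DH} as the \emph{operator}-domain inclusion $D[H_P\!\restriction_{P\mathcal H}]\subset D[H]$, from which you conclude that $(I+H_P)^{-1}Pf\in D[H]$ for every $f$, hence that $T_0$ is defined on all of $\mathcal H$ and that $I+H_P\!\restriction_{P\mathcal H}$ maps $D[H]\cap P\mathcal H$ \emph{onto} $P\mathcal H$. That inclusion is not available at the operator level: the one-line proof of Proposition~\ref{p:DHP<DH} only compares the quadratic forms ($\|H^{1/2}w\|=\|H_P^{1/2}w\|$ on the form domain), and in the paper's own illustration the operator-level statement fails -- $D[-\Delta_\Omega]=W_0^1(\Omega)\cap W^2(\Omega)$, extended by zero, is not contained in $H^2(\mathbb R^d)=D[-\Delta]$. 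If your reading were correct the lemma would be pointless, since there would be nothing to close or extend; in fact $P(I+H_P)(D[H]\cap P\mathcal H)$ is in general only a dense subspace of $P\mathcal H$, and the domain formula has to be obtained, as in the paper, from the product-domain bookkeeping $g\in D[T_0]\Leftrightarrow (I+H_P)^{-1}Pg\in D[H]$, a condition which is genuinely restrictive.

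The second gap is the decisive one, in part (ii). The identity $H_P\xi=PH\xi$ is indeed valid for $\xi\in D[H]\cap P\mathcal H$ (though not, as you state it, on all of $D[H_P\!\restriction_{P\mathcal H}]$, again because of the unavailable inclusion), and with $g=(I+H_P)\xi$ your computation correctly yields $T_0g=(I+H)\xi=g+(I-P)H\xi$, i.e.\ only $PT_0g=g$. The passage ``the $P\mathcal H$-component of $T_0g$ is exactly $g$; in the direct-sum picture this is the equality $T_0g=g$'' is a non sequitur: the decomposition $T_0=T_0\!\restriction_{P\mathcal H}\oplus\,0$, like \eqref{HP}, splits the \emph{domain}, not the range -- $T_0\!\restriction_{P\mathcal H}$ maps into all of $\mathcal H$, and nothing in your argument shows $(I-P)H\xi=0$; it does not follow from $\xi\in D[H]\cap P\mathcal H$ alone. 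The vanishing of that term, i.e.\ the fact that $(I+H)(I+H_P)^{-1}g$ lands back in $P\mathcal H$, is precisely the content of \eqref{T0-eq} and is what the norm bound $\|\tilde T_0\|\le 1$ (used later for Lemma~\ref{l:estimate-T3}) depends on; establishing only that the compression $PT_0\!\restriction_{P\mathcal H}$ closes to $I_{P\mathcal H}$ is strictly weaker. The paper attacks exactly this point by a different route -- identifying $P(I+H_P)$ with $((I+H)^{1/2}P)^*(I+H)^{1/2}P$ and deducing the graph inclusion $(I+H_P)^{-1}P\supset[P(I+H)P]^{-1}$ before cancelling -- whereas your proposal leaves the term $(I-P)H\xi$ on the table and never disposes of it.
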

\begin{proof}
Product of linear operators $A$ and $B$ in ${\mathcal H}$ has the domain $D[AB] = B^{(-1)}D[A] := \{\,g \in D[B]:\, Bg \in D[A]\,\}$. Since $P$ commutes with $H_P$, we can rewrite \eqref{def-T0} as $T_0 := [(I+H)P][(I+ H_P)^{-1}P]$ with the domain
\begin{align*}
 D[T_0] &= [P(I+H_P)^{-1}]^{(-1)}D[(I+H)P] \\
        &= \{\,g \in D[(I+ H_P)^{-1}P]:\,(I+ H_P)^{-1}Pg \in D[(I+H)P]\} \\
        &= \{\,g \in D[H_P]:\,(I+ H_P)^{-1}Pg \in D[HP]\} \\
        &= P(I+H_P)D[HP]= P(I+H_P) (D[H]\cap P{\mathcal H}) \oplus (P{\mathcal H})^{\perp}
\end{align*}
and the range
\begin{align*}
 R[T_0] &= T_0D[T_0]   \\
        &= [(I+H)P][(I+ H_P)^{-1}P]P(I+H_P)D[HP] \\
        &= [(I+H)P]D[HP]
         = (I+H)(D[H]\cap P{\mathcal H}).
\end{align*}

Let us now turn to the claim (ii). Just as $H_P$ is the self-adjoint operator in ${\mathcal H}$ associated with the quadratic form $u \mapsto \|H^{1/2}Pu\|^2$ defined on $D[H^{1/2}P] = (D[H^{1/2}]\cap P{\mathcal H}) \oplus (P{\mathcal H})^{\perp}$, the self-adjoint operator $P(I+H_P)$ is associated with the form $u \mapsto \|(I+H)^{1/2}Pu\|^2$ defined in view of the inequalities
$$
 \frac{1}{\sqrt{2}}(I+H^{1/2}) \le (I+H)^{1/2} \le I+H^{1/2}.
$$
on the same domain $D[(I+H)^{1/2}P] = D[H^{1/2}P]$. Consequently,
\begin{equation} \label{I+H_P-1}
 P(I+H_P) = P(I+H)_P = ((I+H)^{1/2}P)^* (I+H)^{1/2}P\,.
\end{equation}
We are going to use it to show \eqref{T0-eq}. The adjoint $((I+H)^{1/2}P)^*$ to $(I+H)^{1/2}P$ is a closed extension of the closable (in general, non-closed) operator $P(I+H)^{1/2}$, in other words, $((I+H)^{1/2}P)^*\supset P(I+H)^{1/2}$, which yields
$$ 
  ((I+H)^{1/2}P)^*(I+H)^{1/2}P \supset
                  P(I+H)^{1/2}(I+H)^{1/2}P = P(I+H)P,
$$ 
i.e. the operator on the left-hand side is an extension of the operator on the right. As both sides are invertible, the analogous inclusion holds for their inverses,
$$ 
 [(I+H_P)^{-1}P]
= [((I+H)^{1/2}P)^*(I+H)^{1/2}P]^{-1} \supset [P(I+H)P]^{-1}.
$$ 
It follows that for any $g \in D[T_0]$ specified in \eqref{T0} we have
$$
   T_0\, g = (I+H)[(I+H_P)^{-1}P]g  = (I+H)[P(I+H)P]^{-1}g = Pg,
$$
which yields the desired claim \eqref{T0-eq}, since $T_0$ is reduced by the projection $P$ and  $D[T_0\!\restriction_{P{\mathcal H}}] \subseteq P{\mathcal H}$. Thus we see that the closure of $T_0\!\restriction_{P{\mathcal H}}$ is the identity operator $I_{P{\mathcal H}}$ on $P{\mathcal H}$, and $\tilde{T_0}$ as the closed extension of $T_0$ to the whole ${\mathcal H}$ has the norm not exceeding one.
\end{proof}

\begin{lemma} \label{l:estimate-T3}
The operator $T_3(s)$ in \eqref{T3}, acting in ${\mathcal H}$ with the domain
$P(I+isH_P)D[HP]
= (I+isH_P)(D[H]\cap P{\mathcal H})$, can be extended to a bounded operator
on the whole space ${\mathcal H}$ with the norm satisfying
$\|T_3(s)\|\leq \sqrt{2}$.
\end{lemma}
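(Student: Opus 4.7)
The plan is to factor $T_3(s)=B\cdot A$ in such a way that $A$ isolates the effect of the purely imaginary coefficient $is$ in $(I+isH_P)^{-1}$, while $B$ fits directly into the framework of Lemma~\ref{l:T0}. Concretely, I would set
\begin{equation*}
 A := (I+|s|H_P)(I+isH_P)^{-1}, \qquad
 B := (I+|s|H)[(I+|s|H_P)^{-1}P].
\end{equation*}
Since $P$ commutes with $H_P$ by \eqref{decomp-HP}, the middle factors telescope, giving $B\cdot A = (I+|s|H)(I+isH_P)^{-1}P = T_3(s)$ on the appropriate dense domain.

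To control $A$, I would use the decomposition \eqref{decomp-HP} together with the spectral theorem: on $(P\mathcal{H})^{\perp}$ the operator $A$ reduces to the identity, while on $P\mathcal{H}$ it acts as the bounded Borel function $\lambda\mapsto (1+|s|\lambda)/(1+is\lambda)$ of the nonnegative self-adjoint operator $H_P\!\!\restriction\!_{P\mathcal{H}}$. The elementary inequality $(1+x)^2\le 2(1+x^2)$ for $x=|s|\lambda\ge 0$ (with equality at $x=1$) then yields $\|A\|\le\sqrt{2}$.

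To control $B$, I would invoke Lemma~\ref{l:T0} with $H$ replaced by the nonnegative self-adjoint operator $|s|H$; since $(|s|H)_P=|s|H_P$, the lemma shows that $B$ extends to the contraction $I_{P\mathcal{H}}\oplus 0$ on $\mathcal{H}$. Combining the two bounds, for any $g\in P\mathcal{H}$ the vector $h:=(I+isH_P)^{-1}g$ lies in $D[H_P\!\!\restriction\!_{P\mathcal{H}}]\subset D[H]\cap P\mathcal{H}$ by Proposition~\ref{p:DHP<DH}, so $Ag=(I+|s|H_P)h\in D[B]$, and hence
\begin{equation*}
 \|T_3(s)g\|=\|B(Ag)\|\le\|Ag\|\le\sqrt{2}\,\|g\|.
\end{equation*}
Extending by zero on $(P\mathcal{H})^{\perp}$ then produces the claimed bounded operator on all of $\mathcal{H}$ with norm at most $\sqrt{2}$.

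The main obstacle is the delicate domain bookkeeping behind the telescoping identity $T_3(s)=B\cdot A$: one has to ensure that every unbounded piece appearing along the way — in particular $(I+|s|H)$ applied after $(I+|s|H_P)^{-1}P$, and $(I+|s|H_P)$ applied to $(I+isH_P)^{-1}g$ — is actually defined on the vector to which it is applied. This rests on the commutativity of $P$ with $H_P$ (and with every Borel function of $H_P$) provided by \eqref{decomp-HP}, together with Proposition~\ref{p:DHP<DH} which guarantees that the range of $(I+isH_P)^{-1}$ on $P\mathcal{H}$ is contained in $D[H]$.
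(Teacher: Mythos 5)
Your proposal is correct and follows essentially the same route as the paper: the paper factors $T_3(s)=\big\{(I+|s|H)[(I+|s|H_P)^{-1}P]\big\}\cdot\big\{[P(I+|s|H_P)][(I+isH_P)^{-1}P]\big\}$, which is your $B\cdot A$ up to relocating projections $P$ that commute with $H_P$, bounds the first factor by $1$ via Lemma~\ref{l:T0} applied with $|s|H$ in place of $H$, and bounds the second factor by $\sqrt{2}$ via the spectral theorem, exactly as you do. Your additional domain bookkeeping through Proposition~\ref{p:DHP<DH} is consistent with how the paper itself handles domains in the subsequent Lemma~\ref{l:unif-bound-T3}.
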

\begin{proof}
Note first that the claim of Lemma~\ref{l:T0} remains valid when we replace $H_P,\,H$ by $|s|H_P,\,|s|H$, respectively, for any $s \not= 0$.
We rewrite the operator $T_3(s)$ in \eqref{T3} as
\begin{align*}
T_3(s)
&= \big\{(I+ |s|H)[(I+ |s|H_P)]^{-1}P]\big\}      \\
 &\qquad\qquad \cdot \big\{[P(I+ |s|H_P)][(I+ isH_P)^{-1}P]\big\}\,.
\end{align*}
By spectral theorem the norms of the first and second factors on the right-hand side are one and $\sqrt{2}$, respectively, independently of $s$, and Lemma~\ref{l:T0} (ii) in combination with the above observation implies that the first factor is a contraction, which gives $\|T_3(s)\| \leq \sqrt{2}$.
\end{proof}

\smallskip

In the next step, we turn to investigation of the second and third factors, \eqref{T2tau} and \eqref{T3tau}, of the operator $T^P(t,s;\tau)$ in \eqref{split-D(t,s;tau)to123}. We begin with proving a crucial property of the $\tau$-family $\{T_3(s;\tau)\}_{0<\tau\leq 1}$ with $s \in {\mathbb R} \setminus (M \cup \{0\})$, where the set $M \subseteq {\mathbb R} $ of Lebesgue measure zero was introduced in the text following \eqref{a.e.Cov}; recall that we adopted the separability assumption. After doing that, we will focus on the $\tau$-family $\{T_2(t,s;\tau)\}_{0<\tau\leq 1}$ with $t,\,s \in {\mathbb R} \setminus \{0\}$.

\begin{lemma}  \label{l:unif-bound-T3}
Let the Hilbert space ${\mathcal H}$ be separable and consider a number
$s \in {\mathbb R} \setminus (M \cup \{0\})$, so that for every vector
$f \in {\mathcal H}$ the $\tau$-family
$\{[P(\tau)(I+sK(s\tau))P(\tau)]^{-1}f\}$ converges in the Hilbert
space norm to $[(I+isH_P)^{-1}P]f$ as $\tau\to 0+$.
Then the following claims are valid:

\text{\rm (i)} For fixed $s$, the operator family $\{T_3(s;\tau)\}_{0<\tau\leq 1}$ defined by \eqref{T3tau} is uniformly bounded on ${\mathcal H}$, and converges strongly to $T_3(s)$ in \eqref{T3} as $\tau\to 0+$.

\text{\rm (ii)} To be specific, for a fixed $s \in {\mathbb R} \setminus (M \cup \{0\})$ there is a constant $C_{T_3}(s)\ge\sqrt{2}$ such that
\begin{equation}
\|T_3(s;\tau)\| \leq C_{T_3}(s) \quad \text{for all}\;\: \tau \in (0,1]\,.
                                                     \label{unif-bound-T3}
\end{equation}
\end{lemma}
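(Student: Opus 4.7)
By hypothesis $g_\tau := A_\tau^{-1}f = [P(\tau)(I+sK(s\tau))P(\tau)]^{-1}f$ converges in norm to $g_0 := (I+isH_P)^{-1}Pf$, and Proposition~\ref{p:DHP<DH} places $g_0 \in D[H] \cap P\mathcal{H}$, so the candidate strong limit $T_3(s)f = (I+|s|H)g_0$ is a well-defined vector with $\|T_3(s)f\| \le \sqrt{2}\|f\|$ by Lemma~\ref{l:estimate-T3}. My plan is to prove the uniform bound (ii) first and then upgrade to the strong convergence (i) via a density/Banach--Steinhaus argument.

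For (ii) I would imitate the factorization used in the proof of Lemma~\ref{l:estimate-T3}. Writing $(H_\tau)_{P(\tau)} := P(\tau)H_\tau P(\tau)$, which plays the role of $H_P$ for the bounded self-adjoint $H_\tau$ and the projection $P(\tau)$, I split
\begin{equation*}
T_3(s;\tau) = \bigl\{(I+|s|H_\tau)[(I+|s|(H_\tau)_{P(\tau)})^{-1}P(\tau)]\bigr\}\cdot\bigl\{P(\tau)(I+|s|(H_\tau)_{P(\tau)})\,A_\tau^{-1}\bigr\}.
\end{equation*}
Lemma~\ref{l:T0}, applied to $|s|H_\tau$ and $P(\tau)$, shows the first factor is a contraction. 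On $P(\tau)\mathcal{H}$ the second factor reads $(I+|s|P(\tau)H_\tau P(\tau))\cdot A_\tau^{-1}$, and to bound its norm uniformly I would exploit the defining identity $g_\tau + sP(\tau)K(s\tau)g_\tau = P(\tau)f$. Taking real and imaginary parts of $\langle g_\tau, P(\tau)f\rangle$ yields control of $\|\sin(|s|\tau H/2)g_\tau\|^2$ and of $|\langle g_\tau, \sin(s\tau H)g_\tau\rangle|$; combined with a spectral comparison of $|s|H_\tau$ with the accretive part $sG(s\tau)$ of $sK(s\tau)$ on the spectral subspaces of $H$, this should deliver a constant $C_{T_3}(s)\ge\sqrt{2}$, independent of $\tau\in(0,1]$, with $\|T_3(s;\tau)\|\le C_{T_3}(s)$.

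Granted (ii), I would obtain (i) as follows. Splitting $T_3(s;\tau)f - T_3(s)f = (I+|s|H_\tau)(g_\tau-g_0) + |s|(H_\tau - H)g_0$, the second term tends to zero by strong resolvent convergence of $H_\tau$ to $H$ on $D[H]$, since $g_0\in D[H]$. For the first term, testing against $\psi\in D[H]$ and using self-adjointness of $I+|s|H_\tau$ shows $(I+|s|H_\tau)g_\tau$ converges weakly to $T_3(s)f$; the norm convergence $\|T_3(s;\tau)f\|\to\|T_3(s)f\|$ follows by expanding the squared norms through the spectral theorem and using the controls on $g_\tau$ furnished during the proof of (ii), after which weak plus norm convergence deliver strong convergence. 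The principal obstacle is the uniform bound in (ii): naive spectral calculus shows that $|s|H_\tau$ and $|sK(s\tau)|$ are comparable only away from the resonances $|s|\tau\lambda\in 2\pi\mathbb{Z}_{>0}$, so the projection $P(\tau)$---entering essentially through the equation $A_\tau g_\tau = P(\tau)f$---must be used to suppress the spectral mass of $g_\tau$ near these resonances, and I expect the hypothesis $s\in\mathbb{R}\setminus(M\cup\{0\})$ to play a central role beyond merely providing the strong convergence of $g_\tau$.
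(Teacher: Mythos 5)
Your setup (the norm limit $g_\tau\to g_0=(I+isH_P)^{-1}Pf$, the fact $g_0\in D[H]$ via Proposition~\ref{p:DHP<DH}, and $\|T_3(s)\|\le\sqrt2$ from Lemma~\ref{l:estimate-T3}) is fine, but the core of your plan — proving the uniform bound (ii) first by imitating the factorization of Lemma~\ref{l:estimate-T3} with $H_\tau,P(\tau)$ — has a genuine gap at exactly the step you flag as "should deliver a constant". The second factor $P(\tau)\bigl(I+|s|P(\tau)H_\tau P(\tau)\bigr)A_\tau^{-1}$ requires a \emph{full-power} estimate $\|P(\tau)H_\tau g_\tau\|\le C\|f\|$, whereas the identity $A_\tau g_\tau=P(\tau)f$ only yields \emph{quadratic-form} (square-root) information: the real part of $\langle g_\tau,P(\tau)f\rangle$ bounds $\|(|s|G(s\tau))^{1/2}g_\tau\|$ and the imaginary part bounds $|\langle g_\tau,sH(s\tau)g_\tau\rangle|$ — this is precisely the limitation the paper points out after \eqref{innerP}, where even the separate boundedness of the families $(|t|H^{\pm}(t\tau))^{1/2}P(\tau)u_\tau(t)$ cannot be extracted (cf.\ Proposition~\ref{p:sigmaweak-bdd}). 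Moreover the spectral comparison you invoke breaks down irreparably at the resonances $|s|\tau\lambda\in2\pi\mathbb{N}$, where both $\sin(s\tau\lambda)$ and $1-\cos(s\tau\lambda)$ vanish while $|s|\lambda(1+\tau\lambda)^{-1}\sim|s|/\tau$ blows up; and the hypothesis $s\in\mathbb{R}\setminus(M\cup\{0\})$ supplies only the norm convergence of $g_\tau$, not any spectral localization of $g_\tau$ away from those resonances, so there is no mechanism by which $P(\tau)$ "suppresses" the resonant spectral mass. The same missing control also undermines your route to (i): $g_\tau\to g_0$ in norm does not make $(I+|s|H_\tau)(g_\tau-g_0)\to0$, since $\|H_\tau\|\sim\tau^{-1}$, and the claimed norm convergence $\|T_3(s;\tau)f\|\to\|T_3(s)f\|$ again presupposes the bounds you have not established.

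The paper avoids this obstruction by reversing your order of argument. It first proves the \emph{strong convergence} $T_3(s;\tau)g\to T_3(s)g$ directly: $A_\tau:=I+|s|H_\tau$ converges to $I+|s|H$ in the strong resolvent sense, hence $I+|s|H$ is the \emph{strong graph limit} of $\{A_\tau\}$ (Proposition~\ref{p:graphlim}, Glimm--Jaffe); since $\psi_\tau:=[P(\tau)(I+sK(s\tau))P(\tau)]^{-1}g\to\psi:=(I+isH_P)^{-1}Pg$ in norm with $\psi\in D[H_P]\subset D[I+|s|H]$, the pair $(\psi,(I+|s|H)\psi)$ lies in the graph limit and one concludes $A_\tau\psi_\tau\to(I+|s|H)\psi$, i.e.\ $T_3(s;\tau)g\to T_3(s)g$. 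Only \emph{then} is the uniform bound \eqref{unif-bound-T3} obtained, a posteriori: for each $g$ the convergence plus $\|T_3(s)\|\le\sqrt2$ bounds $\|T_3(s;\tau)g\|$ for $\tau\in(0,\tau_s(g)]$, continuity of $\tau\mapsto\|T_3(s;\tau)g\|$ on the compact interval $[\tau_s,1]$ bounds the rest, and the Banach--Steinhaus theorem turns these pointwise bounds into the operator-norm bound $C_{T_3}(s)$. If you want to salvage your write-up, replace your attempted direct estimate of the second factor by this graph-limit argument; as it stands, the uniform bound is asserted rather than proved, and the sketch given cannot be completed.
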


\medskip

\begin{proof}
(i) Our aim is to verify that $T_3(s;\tau)g$ converges for $g \in {\mathcal H}$ to $T_3(s)g$ as $\tau\to 0+$ in the Hilbert space norm and to use this fact to establish the uniform boundedness of $\{T_3(s;\tau)g\}_{0<\tau\leq 1}$.

To this end, let us recall the notion of convergence in the strong graph sense: given a sequence $\{A_n\}_{n=1}^{\infty}$ of operators in a Hilbert space ${\mathcal H}$ we say that $(\psi,\varphi) \in {\mathcal H}\times {\mathcal H}$ belongs to the strong graph limit if one can find $\psi_n \in D[A_n],\, n= 1,2, \dots,$ such that
$$
  \psi_n \longrightarrow \psi,\quad
  A_n \psi_n \longrightarrow \varphi
  \quad\;\text{\rm as}\;\; n\to\infty\,.
$$
We denote the set of such pairs $(\psi,\varphi)$ by $\Gamma_{\infty}^s$. If $\Gamma_{\infty}^s$ is the graph of an operator $A$, i.e. $\Gamma_{\infty}^s = \{ (\psi, A \psi) \in {\mathcal H}\times {\mathcal H};\, \psi \in D[A] \}$, $\:A$ is also said to be the \emph{strong graph limit} of $\{A_n\}$. We have the following result \cite[Thm~4]{GJ69}, see also \cite[Thm~VIII.26]{RS80}:

\begin{proposition} \label{p:graphlim}
Let $\{A_n\}_{n=1}^{\infty}$ and $A$ be self-adjoint operators in ${\mathcal H}$. Then $\{A_n\}$ converges to $A$ in strong resolvent sense if and only if $A$ is the strong graph limit of $\{A_n\}$.
\end{proposition}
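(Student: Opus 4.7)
The plan is to prove both implications by exploiting the identity $(A_n+i)^{-1}(A_n+i)\psi_n = \psi_n$ for $\psi_n \in D[A_n]$, together with the a priori contraction bound $\|(A_n+i)^{-1}\| \le 1$ that holds automatically for any self-adjoint operator. For convenience I would work at the single point $z=i$; by the resolvent identity applied to self-adjoint operators, strong convergence of one resolvent is equivalent to strong convergence of all resolvents at non-real $z$, and a symmetric argument at $z=-i$ handles that case.

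For the implication \emph{strong resolvent convergence} $\Rightarrow$ \emph{strong graph limit}, I would first establish the inclusion $\Gamma(A) \subseteq \Gamma_\infty^s$: given $\psi \in D[A]$, set $\psi_n := (A_n+i)^{-1}(A+i)\psi \in D[A_n]$, so that strong resolvent convergence yields $\psi_n \to (A+i)^{-1}(A+i)\psi = \psi$, and the rearrangement $A_n\psi_n = (A+i)\psi - i\psi_n$ gives $A_n\psi_n \to A\psi$. For the reverse inclusion $\Gamma_\infty^s \subseteq \Gamma(A)$, suppose that $\psi_n \in D[A_n]$ obey $\psi_n \to \psi$ and $A_n\psi_n \to \varphi$. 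Then $(A_n+i)\psi_n \to \varphi + i\psi$, and the decomposition
\begin{align*}
 &\psi_n - (A+i)^{-1}(\varphi+i\psi) \\
 &\qquad = (A_n+i)^{-1}\big[(A_n+i)\psi_n - (\varphi+i\psi)\big]
          + \big[(A_n+i)^{-1} - (A+i)^{-1}\big](\varphi+i\psi)
\end{align*}
has both terms tending to zero (the first by contractivity, the second by strong resolvent convergence applied to the \emph{fixed} vector $\varphi + i\psi$), which forces $\psi = (A+i)^{-1}(\varphi+i\psi)$. Hence $\psi \in D[A]$ and $A\psi = \varphi$.

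For the converse, take an arbitrary $f \in {\mathcal H}$; self-adjointness of $A$ gives $f = (A+i)\psi$ with $\psi := (A+i)^{-1}f \in D[A]$. The strong-graph-limit hypothesis furnishes $\psi_n \in D[A_n]$ with $\psi_n \to \psi$ and $A_n\psi_n \to A\psi$, whence $(A_n+i)\psi_n \to f$. Applying $(A_n+i)^{-1}$ and using the contraction bound,
\begin{equation*}
 \|(A_n+i)^{-1}f - \psi_n\|
 = \big\|(A_n+i)^{-1}\big[f - (A_n+i)\psi_n\big]\big\|
 \le \|f - (A_n+i)\psi_n\| \longrightarrow 0,
\end{equation*}
so $(A_n+i)^{-1}f \to \psi = (A+i)^{-1}f$, which is strong resolvent convergence.

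The step I expect to be the most delicate is the reverse inclusion $\Gamma_\infty^s \subseteq \Gamma(A)$ in the first direction: \emph{a priori} the strong graph limit could be a proper symmetric extension of $A$, and ruling this out genuinely uses the self-adjointness of the \emph{limit} operator $A$ through the bijectivity of $A+i \colon D[A] \to {\mathcal H}$. Without this, one could only conclude that the closure of $A$ is contained in the strong graph limit; it would then be an additional argument (employing, e.g., that a self-adjoint operator has no proper symmetric extensions) to identify the two. The remaining steps are straightforward bookkeeping with the identity $(A_n+i)^{-1}(A_n+i)\psi_n=\psi_n$ and the contraction bound.
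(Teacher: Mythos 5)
Your proof is correct, and there is nothing to compare it against in the paper itself: Proposition~\ref{p:graphlim} is quoted there as a known result with a reference to \cite{GJ69} and \cite[Thm.~VIII.26]{RS80}, and your argument is essentially the standard textbook proof, establishing $\Gamma(A)=\Gamma_{\infty}^{s}$ via the identity $(A_n+i)^{-1}(A_n+i)\psi_n=\psi_n$ and the uniform bound $\|(A_n\pm i)^{-1}\|\le 1$, with the bijectivity of $A+i$ on $D[A]$ doing the work of excluding a proper symmetric extension, exactly as you note. The only point to watch is the reduction to $z=\pm i$: passing from convergence at one non-real point to the opposite half-plane is not merely the first resolvent identity, but since you prove convergence at both $+i$ and $-i$ directly (your ``symmetric argument''), the extension to all non-real $z$ is then routine and no gap arises.
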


To make use of this result, let us consider the family $\{A_{\tau}\}_{0< \tau \leq 1}$ of the following bounded and self-adjoint operators,
$$
A_{\tau} : = I+ |s|H_{\tau}\,,
                                            \quad\; 0< \tau \leq 1.
$$
By spectral theorem its elements converge in the strong operator topology to the self-adjoint operator $I+|s|H$ as $\tau\to 0+$, and that by Proposition~\ref{p:graphlim} means that $I+|s|H$ is the strong graph limit of $\{A_{\tau}\}$. Since $s \in {\mathbb R} \setminus (M\cup \{0\})$ holds by assumption, we have
\begin{align*}
 &\quad \psi_{\tau} := [P(\tau)(I+ sK(s\tau))P(\tau)]^{-1}g
 \longrightarrow  \psi := [(I+isH_P)^{-1}P]g\,,  \\
\intertext{as $\tau\to 0+$ for $g\in {\mathcal H}$, with}
 &\quad [(I+isH_P)^{-1}P]g  \in D[H_P]
       \subset D[(I+ |s|H)P] \subset D[I+ |s|H]\,.
\end{align*}
In this way, the pair $(\psi, (I+ |s|H)\psi)$ belongs to the set $\Gamma_{\infty}^s$ resulting from the strong graph limit of the family $\{A_{\tau}\}$, and consequently, $A_{\tau}\psi_{\tau} \longrightarrow (I+ |s|H)\psi$, i.e.
\begin{align*}
&A_{\tau}[P(\tau)(I+ sK(s\tau))P(\tau)]^{-1}g
 \longrightarrow (I+ |s|H)[(I+isH_P)^{-1}P]g\,,
\end{align*}
in other words, the sought convergence, $T_3(s;\tau)g \longrightarrow\, T_3(s)g$. This implies, by the uniform boundedness principle (or Banach-Steinhaus theorem), the family $\{T_3(s;\tau)\}_{0\le\tau< 1}$ is uniformly bounded in the operator norm.

(ii) Consider again an arbitrary $g \in {\mathcal H}$. In view of the convergence established above in combination with the bound on the norm of $T_3(s)$ from Lemma~\ref{l:estimate-T3}(ii), we see that for a fixed $s \in {\mathbb R}\setminus (M\cup\{0\})$ there is a positive, $g$-dependent number $\tau_s = \tau_s(g) \leq 1$ such that
\begin{align*}
\|T_3(s;\tau)g\| \leq \|T_3(s)g\| + \tfrac12
                 \leq \|T_3(s)\|\|g\| + \tfrac12
                 \leq \sqrt{2}\|g\| + \tfrac12\,,
                                             \;\: \tau \in (0,\tau_s]\,.
\end{align*}
This helps us to complement our knowledge of the uniform boundedness using
the estimate
$$
\sup_{0<\tau\leq 1} \|T_3(s;\tau)g\|
\leq \max\big\{\sqrt{2}\|g\| + \tfrac12,\,
                   \sup_{\tau_s<\tau\leq 1}\|T_3(s;\tau)g\|\,\big\}\,.
$$
We note that the function $\tau\mapsto\|T_3(s;\tau)g\|$ is continuous on the compact set $[\tau_s,1] \subset {\mathbb R}$, hence the right-hand side is bounded for any $g$ and, applying again the uniform boundedness principle, we conclude that the $\tau$-family $\{T_3(s;\tau)\}_{0<\tau\leq 1}$ is uniformly bounded by a constant $C_{T_3}(s) \geq \sqrt{2}$ depending on $s$. This completes the proof of Lemma~\ref{l:unif-bound-T3}.
\end{proof}

\medskip

It remains to deal with the middle factor $T_2(t,s;\tau)$ in \eqref{T2tau}. We rewrite it as
\begin{align}
T_2(t,s;\tau)
&= i(sH(s\tau) -tH(t\tau))(I+ |s|H_{\tau})^{-1}  \nonumber\\
 &\quad +(sG(s\tau) -tG(t\tau))I+ |s|H_{\tau})^{-1} \nonumber\\
&=: iT_{2,H}(t,s;\tau) + T_{2,G}(t,s;\tau).     \label{T2HGtau}
\end{align}
For a fixed $\tau\in(0,1]$, the operators $T_{2,H}(t,s;\tau)$, $T_{2,G}(t,s;\tau)$, as well as their linear combination, are bounded operators which are in the functional calculus sense obtained from $H$ as its  appropriate bounded and continuous functions; in view of \eqref{tH(t)-sH(s)} and \eqref{tG(t)-sG(s)} we have
\begin{align*}
T_{2,H}(t,s;\tau)
&= \phi_H(t,s;\tau,H)
 := (tH(t\tau)-sH(s\tau))(I+ |s|H_{\tau})^{-1}         \\
&= \frac{\tfrac{2}{\tau} \sin(\tfrac{t-s}2\tau\lambda)
   \cos(\tfrac{s+t}2\tau\lambda)}{I+ |s|\lambda(I+\tau\lambda)^{-1}}\,, 
    \\
T_{2,G}(t,s;\tau)
&=  \phi_G(t,s;\tau,H)
  := (tG(t\tau)- sG(s\tau))(I+ |s|H_{\tau})^{-1}        \nonumber\\
&= \frac{\tfrac{2}{\tau} \sin(\tfrac{t-s}2\tau\lambda)\,
   \sin(\tfrac{s+t}2\tau\lambda)}{I+ |s|\lambda(I+\tau\lambda)^{-1}}\,,  
\end{align*}
and
$$
T_2(t,s;\tau)
=  \phi(t,s;\tau,H)
 := i\phi_H(t,s;\tau,H) +\phi_G(t,s;\tau,H)       
$$
It is straightforward to check that
\begin{equation}
|\phi(t,s;\tau,\lambda)|^2 = \Big|\frac{\tfrac{2}{\tau} \sin(\tfrac{t-s}2\tau\lambda)}
     {I+ |s|\lambda(I+\tau\lambda)^{-1}}\Big|^2,    \label{bound-phi}
\end{equation}
which means that for any $g \in {\mathcal H}$ we have
\begin{align}
\|T_2(t,s;\tau)g\|^2 &= \|T_{2,H}(t,s;\tau)g\|^2 + \|T_{2,G}(t,s;\tau)g\|^2 \nonumber\\
&= \int_{0-}^{\infty} \Big|\frac{\tfrac{2}{\tau} \sin(\tfrac{t-s}2\tau\lambda)}
          {I+ |s|\lambda(I+\tau\lambda)^{-1}}\Big|^2
                           \|E(\mathrm{d}\lambda)g\|^2\,. \label{norm-T2}
\end{align}
The following lemma represents the second crucial element in establishing Lemma~\ref{l:equiconti}, and in this way, proving our main theorems.

\begin{lemma} \label{l:estimate-T}
Consider $t,s \in {\mathbb R} \setminus\{0\}$ such that either $t,s >0$ or $t,s < 0$. Then for $T_2(t,s;\tau)$ given by \eqref{T2tau} and for $D(t,s;\tau)$ given by \eqref{D(t,s;tau)} or \eqref{split-D(t,s;tau)to123} the following holds:

\smallskip

\noindent \text{\rm (i)} The $\tau$-family $\{T_2(t,s;\tau\}$ of bounded operators in \eqref{T2HGtau} is uniformly bounded locally uniformly for $t,s \in {\mathbb R} \setminus\{0\}$. Further, for every $f \in {\mathcal H}$ and any $\varepsilon>0$, there is an $s$-dependent number  $\delta = \delta(f;\varepsilon;s) >0$ such that
\begin{equation}
t\in {\mathbb R}\setminus \{0\}\;\; \text{\it with}\;\, |t-s|< \delta \;\;
\Longrightarrow\;\; \|T_{2}(t,s;\tau)f\| < \varepsilon,  \label{equicont-T2}
\end{equation}
uniformly with respect to $\tau\in(0,1]$.

\smallskip

\noindent \text{\rm (ii)} Let further $s \in {\mathbb R}\setminus (M\cup\{0\})$, then for every $f \in {\mathcal H}$ and any $\varepsilon>0$,
there is a number \mbox{$\delta = \delta(f;\varepsilon;s) >0$} such that
\begin{equation}
t\in {\mathbb R}\setminus \{0\}\;\;
                           \text{\it with}\;\; |t-s|< \delta
\;\;\Longrightarrow\;\;
 \|D(t,s;\tau)f\|< \,  \varepsilon,      \label{equicont-D}
\end{equation}
uniformly with respect to $\tau\in(0,1]$.
\end{lemma}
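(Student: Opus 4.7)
The plan is to establish (i) by direct analysis of the spectral integrand in \eqref{norm-T2}, and then deduce (ii) from (i) by combining the factorisation \eqref{split-D(t,s;tau)to123} with a norm precompactness argument.

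For (i), the core estimates on $|\phi(t,s;\tau,\lambda)|$ come in two complementary forms. On one hand, from $|\sin\xi|\le|\xi|$ applied to the numerator and the trivial lower bound $1+|s|\lambda/(1+\tau\lambda)\ge 1$ for the denominator, one obtains the pointwise bound $|\phi(t,s;\tau,\lambda)|\le|t-s|\lambda$, which for each fixed $\lambda$ vanishes as $|t-s|\to 0$ uniformly in $\tau$. On the other hand, a case split on whether $\tau\lambda\le 1$ or $\tau\lambda\ge 1$ yields a $\lambda$-uniform bound $|\phi(t,s;\tau,\lambda)|\le C(s)$ valid for $|t-s|\le 1$ and $\tau\in(0,1]$: in the first range one uses $(1+\tau\lambda)^{-1}\ge\tfrac12$ together with $|\sin\xi|\le|\xi|$ to get $|\phi|\le 2|t-s|/|s|$, while in the second one uses $(1+\tau\lambda)^{-1}\le 1/(\tau\lambda)$ together with $|\sin\xi|\le 1$ to get $|\phi|\le 4/(2\tau+|s|)\le 4/|s|$. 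Taking suprema in $\lambda$ proves the uniform boundedness of $\|T_2(t,s;\tau)\|$ locally uniformly in $t,s\in\mathbb{R}\setminus\{0\}$. For \eqref{equicont-T2}, given $f$ and $\varepsilon>0$ I would choose $R$ so that $C(s)^2\|(I-E(R))f\|^2<\varepsilon^2/2$ and then $\delta>0$ so small that $|t-s|^2R^2\|f\|^2<\varepsilon^2/2$; splitting the spectral integral at $\lambda=R$ and using the pointwise bound on $[0,R]$ and the uniform bound on $(R,\infty)$ delivers the desired estimate uniformly in $\tau$.

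For (ii), the factorisation \eqref{split-D(t,s;tau)to123} together with the contraction property $\|T_1(t;\tau)\|\le 1$ reduces matters to controlling $\|T_2(t,s;\tau)g_\tau\|$ with $g_\tau:=T_3(s;\tau)f$. Part (i) is not directly applicable because the input $g_\tau$ varies with $\tau$. The bridge will be provided by norm precompactness of the family $\{g_\tau\}_{\tau\in(0,1]}$: by Lemma~\ref{l:unif-bound-T3}(i) the hypothesis $s\in\mathbb{R}\setminus(M\cup\{0\})$ gives $g_\tau\to T_3(s)f$ in norm as $\tau\to 0+$, while on any compact $[\tau_0,1]\subset(0,1]$ the map $\tau\mapsto g_\tau$ is norm-continuous via the strong continuity of $\tau\mapsto P(\tau)$, $\tau\mapsto K(s\tau)$ and $\tau\mapsto H_\tau$ combined with the uniform boundedness of the inverse appearing in \eqref{T3tau}. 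Given $\varepsilon>0$, I will cover the precompact set $\{g_\tau\}$ by finitely many $\varepsilon$-balls with centres $h_1,\ldots,h_N$; to each centre, part (i) attaches a threshold $\delta_i>0$ with $\|T_2(t,s;\tau)h_i\|<\varepsilon$ whenever $|t-s|<\delta_i$ and $\tau\in(0,1]$. Setting $\delta:=\min_i\delta_i$ and, for each $\tau$, picking $h_{i(\tau)}$ with $\|g_\tau-h_{i(\tau)}\|<\varepsilon$, the triangle inequality combined with the uniform bound $\|T_2(t,s;\tau)\|\le C(s)$ from (i) yields
\[
\|T_2(t,s;\tau)g_\tau\|\le \|T_2(t,s;\tau)h_{i(\tau)}\|+\|T_2(t,s;\tau)\|\,\|g_\tau-h_{i(\tau)}\|<(1+C(s))\varepsilon,
\]
hence $\|D(t,s;\tau)f\|<(1+C(s))\varepsilon$ uniformly in $\tau$, which after rescaling $\varepsilon$ is \eqref{equicont-D}.

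The main obstacle is the coupling in (ii) between the uniform-in-$\tau$ smallness supplied by (i) -- a statement about a \emph{fixed} input vector -- and the genuinely $\tau$-dependent input $g_\tau$ that appears in the factorisation of $D(t,s;\tau)$. Norm precompactness of $\{g_\tau\}$ is what bridges this gap, but that precompactness rests in turn on the strong-norm convergence $T_3(s;\tau)f\to T_3(s)f$ of Lemma~\ref{l:unif-bound-T3}, which is only available for $s$ outside the exceptional set $M$. This explains the restriction $s\in\mathbb{R}\setminus(M\cup\{0\})$ in (ii) and clarifies why the passage to all $t\ne 0$ in Lemma~\ref{l:equiconti} must be effected separately via the density argument preceding it.
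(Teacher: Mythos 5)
Your argument is correct, and it diverges from the paper's proof in a way worth spelling out. In part (i) the paper does not estimate $|\phi(t,s;\tau,\lambda)|$ directly; it first establishes the strong convergence $T_2(t,s;\tau)\overset{s}\to T_2(t,s)$, extends the $\tau$-family to $\tau=0$ via that limit, argues that $(t,\tau)\mapsto\|T_2(t,s;\tau)f\|$ is continuous on the compact rectangle $D(s)\times[0,1]$, and then invokes the uniform boundedness principle to get $\|T_2(t,s;\tau)\|\le C_{T_2}(s)$. Your case split $\tau\lambda\le 1$ versus $\tau\lambda\ge 1$, producing the explicit bound $\max\{2|t-s|/|s|,\,4/|s|\}$, is more elementary and avoids both the continuity-at-$\tau=0$ issue and the appeal to Banach--Steinhaus; it also gives the bound an explicit $s$-dependence. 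One small slip: in the $\tau\lambda\ge 1$ branch you write that you use $(1+\tau\lambda)^{-1}\le 1/(\tau\lambda)$, but that inequality goes the wrong way for upper-bounding $|\phi|$ (it upper-bounds the denominator). The inequality you actually need, and which does give $4/(2\tau+|s|)$, is $(1+\tau\lambda)^{-1}\ge 1/(2\tau\lambda)$ for $\tau\lambda\ge 1$, i.e.\ $\lambda/(1+\tau\lambda)\ge 1/(2\tau)$; this is worth correcting in the write-up. The subsequent two-range split of the spectral integral at $\lambda=R$ for \eqref{equicont-T2} matches the paper's argument.

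In part (ii), both you and the paper hinge on the norm precompactness of $V_f=\{T_3(s;\tau)f:\,0<\tau\le 1\}$, available because $T_3(s;\tau)f$ is norm-continuous on $(0,1]$ and, for $s\notin M\cup\{0\}$, converges in norm to $T_3(s)f$ as $\tau\to0+$ by Lemma~\ref{l:unif-bound-T3}. The paper exploits precompactness to produce a single cutoff $\lambda_{00}$ making the spectral tail $\|E([\lambda_{00},\infty))T_3(s;\tau)f\|$ small uniformly in $\tau$, and then repeats the two-range estimate from part (i) with $T_3(s;\tau)f$ in place of $f$. You instead cover $V_f$ by a finite $\varepsilon$-net and apply part (i) as a black box at each center $h_i$, transferring via the operator-norm bound $\|T_2(t,s;\tau)\|\le C(s)$. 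Both are valid realizations of the same compactness idea; yours is slightly more modular in that it reuses \eqref{equicont-T2} rather than redoing the spectral estimate, at the cost of an extra $(1+C(s))$ factor that has to be absorbed by rescaling $\varepsilon$.
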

\begin{proof}
(i) The $\tau$-families of bounded operators appearing in \eqref{T2HGtau} converge in the strong operator topology for any fixed pair $t,s \in {\mathbb R} \setminus\{0\}$; by functional calculus we find easily that
\begin{align*}
&T_{2,H}(t,s;\tau) \overset s \to T_{2,H}(t,s)  :=\tfrac{(t-s)H}{I+|s|H},
 \quad T_{2,G}(t,s;\tau) \overset s \to T_{2,G}(t,s) := 0, \\
&T_2(t,s;\tau) \overset s \longrightarrow
 T_2(t,s) := iT_{2,H}(t,s) + T_{2,G}(t,s)  
\end{align*}
holds as $\tau\to 0+$, and therefore
$$ 
 |T_2(t,s;\tau)|^2 \overset s \longrightarrow
 |T_2(t,s)|^2 = T_{2,H}(t,s)^2 + T_{2,G}(t,s)^2
               = \big(\tfrac{(t-s)H}{I+|s|H}\big)^2. 
$$ 
Our way to prove the claim is to apply the uniform boundedness principle to these operator families. From \eqref{norm-T2} we obtain
$$ 
 \|T_2(t,s;\tau)\| = \sup_{\|g\|=1}\|T_2(t,s;\tau)g\| = \sup_{\lambda\geq 0}|\phi(t,s;\tau,\lambda)|\,, 
$$ 
and for $T_2(t,s) = \phi(t,s;H)$ with $\phi(t,s;\lambda):= \tfrac{(t-s)\lambda}{1+|s|\lambda}$ we have similarly
\begin{equation}
 \|T_2(t,s)\| = \sup_{\lambda\geq 0}|\phi(t,s;\lambda)| = \tfrac{|t-s|}{|s|}\,. \label{T2ts-norm}
\end{equation}
We note that $\phi(t,s;\tau,\lambda)$ is continuous with respect to all the variables, $\tau \in (0,1]$, $\lambda \in [0,\infty)$, and $t,s \in {\mathbb R} \setminus\{0\}$. Our intention is now to regard the family $\{T_2(t,s;\tau)\}$ as depending on a broader set of parameters: we fix a nonzero $s$ and consider $\tau$ in $(0,1]$ and $t$ in a neighborhood of $s$, more specifically,
\begin{equation}  \label{larger-parameter-set}
 \tau \in (0,1], \quad
  t \in D(s) := \{t \in {\mathbb R}\setminus \{0\};\, |t-s|
             \leq \tfrac{|s|}2\}.
\end{equation}
For each fixed $f \in {\mathcal H}$ and $s \in {\mathbb R} \setminus\{0\}$ we put
\begin{equation}
\Phi_{f,s}(t;\tau) := \left\{
  \begin{array}{lcc}
    \|T_2(t,s;\tau)f\| & \quad \dots\;\; &  0<\tau\leq 1, \\[.3em]
    \|T_2(t,s)f\|      & \quad \dots\;\; & \tau=0,
\end{array}\right.
\end{equation}
which is a bounded and continuous function on the compact set $D(s) \times [0,1] \subset {\mathbb R}^2$. Then
$
C_{f,s} := \max_{t,\tau \in D(s) \times [0,1]}\Phi_{f,s}(t;\tau)
$
exists, depending on $f$ and $s$, but being independent of $t \in D(s)$ and $\tau \in(0,1]$, and from the strong convergence of $\{T_2(t,s;\tau)\}$ to $T_2(t,s)$ noted above we infer that
$$
   \|T_2(t,s;\tau)f\| \leq C_{f,s}
$$
holds for given $f \in {\mathcal H}$,  $\:s\ne 0$, and all $(t,\tau) \in D(s) \times (0,1]$. Applying now the uniform boundedness principle to this operator family with the extended set of parameters \eqref{larger-parameter-set}, we establish the existence of a constant $C_{T_2}(s)>0$, depending on $s$ but independent of $t \in D(s)$ and $\tau \in (0,1]$, such that
\begin{equation}  \label{T2-bound}
   \|T_2(t,s;\tau)\| \leq C_{T_2}(s).
\end{equation}
In view of the strong convergence, it cannot be smaller than the norm of $T_2(t,s)$ in \eqref{T2ts-norm}, that is, $C_{T_2}(s) \geq \tfrac{|t-s|}{|s|}$.

Since $E([\lambda,\infty)) \overset s \to 0$ holds as $\lambda\to\infty$, one can find for a fixed $s$, any $f \in {\mathcal H}$, and each $\varepsilon>0$ a number $\lambda_0 = \lambda_0(f;\varepsilon;s) >0$ such that
\begin{equation} \label{phi-tail}
  \|E([\lambda_0,\infty))f\|^2
                < \tfrac1{C_{T_2}(s)^2}\cdot\tfrac{\varepsilon^2}{2}\,.
\end{equation}
In this case using \eqref{bound-phi} and \eqref{phi-tail}, we get
\begin{align*}
\|T_2&(t,s;\tau)f\|^2
= \|\phi(t,s;\tau,H)f\|^2                             \\
&= \Big(\int_{0-}^{\lambda_0}\,\, +\,\, \int_{\lambda_0}^{\infty}\Big)
   |\phi(t,s;\tau,\lambda)|^2 \|E(\mathrm{d}\lambda)f\|^2 \\
&\leq \int_{0-}^{\lambda_0}
  \bigg|\tfrac{\tfrac{2}{\tau}\sin(\tfrac{t-s}2\tau\lambda)}
 {1+ (s\lambda)(I+\tau\lambda)^{-1}}\bigg|^2
  \|E(\mathrm{d}\lambda)f\|^2
               +C_{T_2}(s)^2\|E([\lambda_0,\infty))f\|^2  \\
&\leq  (|t-s|\lambda_0)^2 \|E([0,\lambda_0))f\|^2
   + \tfrac{\varepsilon^2}2\,,
\end{align*}
where the last inequality comes from the estimate of the numerator of the squared modulus factor in the above integral,
$$
 \big|\tfrac{2}{\tau}\sin(\tfrac{t-s}2\tau\lambda)\big|
\leq |t-s|\lambda_0, \qquad 0\leq \lambda \leq \lambda_0.
$$
Since the set $\{(\tau,\lambda); \, 0<\tau\leq 1,\, 0\leq \lambda \leq \lambda_0\}$ is bounded, and therefore its closure is compact, we infer that for a fixed $s\in {\mathbb R}\setminus \{0\}$ there is a number $\delta = \delta(f;\varepsilon;s) >0$ such that
$$
  t \in {\mathbb R}\setminus \{0\}\,\, \text{\rm with}\,\, |t-s|<\delta \,\,
  \Longrightarrow
  (|t-s|\lambda_0)^2 (1+\|f\|^2) < \tfrac{\varepsilon^2}{2}
$$
uniformly for $0<\tau\leq 1$. In fact, we can find it explicitly; it is enough to choose
$$
\delta(f;\varepsilon;s)
:= \min\big\{\tfrac1{\sqrt{2}\lambda_0(1+\|f\|^2)^{1/2}}\,\varepsilon,
                                                     \tfrac{|s|}2\big\}.
$$
Thus we have
$$
\|T_2(t,s;\tau)f\|^2
\leq (1+\|f\|^2)^{-1}
        \tfrac{\varepsilon^2}{2} \|E([0,\lambda_0))f\|^2
                                   + \tfrac{\varepsilon^2}2
  < \varepsilon^2,
$$
in other words,
$$
 \|T_2(t,s;\tau,H)f\| < \varepsilon,
$$
which yields the implication \eqref{equicont-T2}.

\smallskip\noindent

(ii) To prove the remaining claim of the lemma, we first note that $T_1(t;\tau) \equiv T^P(t;\tau)$ is a contraction, $\|T_1(t;\tau)\| \leq 1$, and therefore
\begin{align*}
 \|D(t,s;\tau)f\|
&= \|T^P(t,s;\tau)f\| = \|T_1(t;\tau) T_2(t,s;\tau) T_3(s;\tau)f\|\\
&\leq \|T_2(t,s;\tau) T_3(s;\tau)f\|.
\end{align*}
To deal with the last norm we repeat the argument from the part (i) replacing $f$ by $T_3(s;\tau)f$, obtaining for any $\lambda_0>0$ the estimate
\begin{align}
&\|T_2(t,s;\tau) T_3(s;\tau)f\|^2                \nonumber   \\
&= \Big(\int_{0-}^{\lambda_0} +\int_{\lambda_0}^{\infty} \Big)
|\phi(t,s;\tau,\lambda)|^2\|E(\mathrm{d}\lambda) T_3(s;\tau)f\|^2
                                                   \nonumber\\
&\leq \Big[\int_{0-}^{\lambda_0}
 \Big|\tfrac{\tfrac{2}{\tau} \sin\tfrac{(t-s)}2\tau\lambda}
     {1+ (s\lambda)(I+\tau\lambda)^{-1}}\Big|^2
\|E(\mathrm{d}\lambda) T_3(s;\tau)f\|^2 \nonumber \\
&\qquad\qquad\qquad\qquad\qquad
+ C_{T_2}(s)^2\|E([\lambda_0,\infty))T_3(s;\tau)f\|^2\Big] \label{T_2T_3} \\
&\leq (|t-s|\lambda_0)^2 \|E([0,\lambda_0))T_3(s;\tau)f\|^2
 + C_{T_2}(s)^2\|E([\lambda_0,\infty)) T_3(s;\tau)f\|^2 \nonumber
\end{align}
with the constant $C_{T_2}(s)$ from \eqref{T2-bound}. The second term of the last expression converges to zero as $\lambda_0 \to \infty$ for any \emph{fixed} $\tau\in(0,1]$.

To make use of the estimate, however, we need to find a $\lambda_0$ independent of $\tau$ in order to obtain convergence \emph{uniform} with respect to $0 <\tau \leq 1$. This can be achieved, however, because for fixed
$f \in {\mathcal H}$ and $s \in {\mathbb R} \setminus (M\cup \{0\})$, the set
$$
 V_f := \{ T_3(s;\tau)f ;\,\, 0 <\tau \leq 1 \} \subset {\mathcal H}
$$
is bounded by \eqref{unif-bound-T3}, and has a compact closure, as a consequence of the strong continuity of $T_3(s;\tau)f$ with respect to the variable $\tau>0$. Then for every $f \in {\mathcal H}$ and every $\varepsilon>0$ one can find an $s$-dependent number $\lambda_{00} = \lambda_{00}(f;\varepsilon;s) >0$ such that
\begin{equation} \label{uniform-for-tau}
 \sup_{0<\tau \leq 1}\|E([\lambda_{00},\infty))T_3(s;\tau)f\|^2
    <  \tfrac1{C_{T_2}(s)^2}\cdot\tfrac{\varepsilon^2}{2}
\end{equation}
with the constant $C_{T_2}(s)$ in \eqref{T2-bound}. Let us be more explicit about the last claim. Since $V_f$ is totally bounded, there is a finite family of vectors $\{y_j\}_{j=1}^N \subset V_f$, $y_j = T_3(s;\tau_j)f$ for some
$\tau_j\in(0,1]$, and an open ball $B(0; \tfrac{\varepsilon}{\sqrt{8}C_{T_2}(s)})$ with the center at the origin and radius $\tfrac{\varepsilon}{\sqrt{8}C_{T_2}(s)}$ for which we have
$
 V_f\,\, \subset\,\, \cup_{j=1}^N \big(y_j
    + B(0; \tfrac{\varepsilon}{\sqrt{8}C_{T_2}(s)})\big).
$
Using again the fact that $E([\lambda,\infty)) \overset s \to 0$ holds as $\lambda\to\infty$ we infer that there is a family $\{\lambda_j\}_{j=1}^N$ of large positive numbers such that $C_{T_2}(s)\|E([\lambda_j,\infty)) y_j\| \leq \tfrac{\varepsilon}{\sqrt{8}}$ holds for $j=1,2,\dots, N$ and we put $\lambda_{00} := \max_{j=1,2,\dots, N}{\lambda_j}$. The finite union coverage means that any $\phi \in V_f$ satisfies $\phi \in y_{j_{\phi}} + B(0; \tfrac{\varepsilon}{\sqrt{8}C_{T_2}(s)})$ for some $1 \leq j_{\phi} \leq N$ with $0< \tau_{j_{\phi}} \leq 1$. Noting that $\|E([\lambda_{00},\infty))\|\le 1$, we have
\begin{align*}
&C_{T_2}(s)^2\|E([\lambda_{00},\infty))\phi\|^2      \\
&\leq C_{T_2}(s)^2\big(\max_{j=1,2,\dots, N}\|E([\lambda_{00},\infty))y_j\|
              +  \tfrac{\varepsilon}{\sqrt{8}C_{T_2}(s)}\big)^2 \\
&\leq 2C_{T_2}(s)^2\big[\max_{j=1,2,\dots, N}\|E([\lambda_{00},\infty))y_j\|^2
              + \big( \tfrac{\varepsilon}{\sqrt{8}C_{T_2}(s)})^2\big] \\
&< \tfrac{\varepsilon^2}4 + \tfrac{\varepsilon^2}4
= \tfrac{\varepsilon^2}2.
\end{align*}

\smallskip

\noindent Returning to the first term on the right-hand side of \eqref{T_2T_3} where take $\lambda_0=\lambda_{00}$, and noting that $\|T_3(s;\tau)f\| \leq C_{T_3}(s)\|f\|$ holds in view of \eqref{unif-bound-T3}, we see that there exists a number $\delta = \delta(f;\varepsilon;s) >0$ such that for a fixed $s$ we have
$$
  t \in {\mathbb R}\setminus \{0\}\; \text{\rm with}\;
|t-s|<\delta \;
  \Longrightarrow
  (|t-s|\lambda_{00})^2 (C_{T_3}(s)\|f\|)^2 < \tfrac{\varepsilon^2}{2},
$$
uniformly for all $\tau\in(0,1]$; one can be again explicit and choose
$$
\delta(f;\varepsilon;s) :=
\min\{\tfrac{1}{\sqrt{2}\lambda_{00}(1+(C_{T_3}(s)\|f\|)^2)^{1/2}}\,
      \varepsilon,\, \tfrac{|s|}2\}\,.
$$
The estimate \eqref{T_2T_3} then gives
$$
\|T_2(t,s;\tau) T_3(s;\tau)f\|^2
< \tfrac{\varepsilon^2}2 + \tfrac{\varepsilon^2}2
  = \varepsilon^2,
$$
or in other words
$$
 \|D(t,s;\tau)f\| \leq \|T_2(t,s;\tau) T_3(s;\tau)f\|
< \varepsilon,
$$
which yields \eqref{equicont-D}. This concludes the proof of the present lemma,
and in this way also of Lemma~\ref{l:equiconti} as we have outlined in the opening of this section.
\end{proof}

\section{An example}
\label{s: example}

Let us finally mention briefly a typical situation in which Zeno
dynamics occurs as indicated, for instance, in the
paper~\cite{FPSS01}, namely the perpetual position ascertaining. We
consider an open domain $\Omega\subset\mathbb{R}^d $ with a smooth
boundary, thought of as the detector volume, and associate with it
the orthogonal projection $P$ on $L^2(\mathbb{R}^d)$ acting as the
multiplication operator by the indicator function $\chi_\Omega$ of
the set $\Omega$. Suppose that the dynamics of the particle
undisturbed by the measurement is free, that is, described by the
Hamiltonian $H = -\Delta$, i.e. the Laplacian in $\mathbb{R}^d$
which is a nonnegative self-adjoint operator in $L^2(\mathbb{R}^d)$.
The assumption of density of the domain of $H^{1/2}P =
(-\Delta)^{1/2}\chi_{\Omega}$ is satisfied, since it contains
$C_0^{\infty}(\Omega) \cup C_0^{\infty}({\mathbb R}^d \setminus
\overline{\Omega})$, where $\overline{\Omega}$ is the closure of
$\Omega$, and this family of functions is dense in $L^2({\mathbb
R}^d)$.

Consider further the Dirichlet Laplacian $-\Delta_\Omega$ in
$L^2(\Omega)$ defined in the usual way \cite[Thm~XIII.15]{RS78} as the
Friedrichs extension of the appropriate quadratic form.
It can be checked \cite[Prop.~6.1]{EI05} that
$$
(-\Delta)_P := ((-\Delta)^{1/2}P)^* (-\Delta)^{1/2}P
$$
is densely defined and its restriction to the subspace $L^2(\Omega)$ is nothing
but the Dirichlet Laplacian $-\Delta_\Omega$ with the domain
$D[-\Delta_\Omega] = W_0^1(\Omega) \cap W^2(\Omega)$, and
$$
  -\Delta_\Omega = (-\Delta)_P\restriction_{PL^2({\mathbb R}^d)}
                 = (-\Delta)_P\restriction_{L^2(\Omega)}\,.
$$
Then Theorem~\ref{th:main} says that
$$
 \text{s\,-}\!\!\lim_{n\rightarrow \infty}  (P\,\mathrm{e}^{-it(-\Delta/n)}P)^n
     = \mathrm{e}^{-it(-\Delta_\Omega)}P
$$
holds in the strong operator topology of $\mathcal{B}(L^2(\mathbb{R}^d))$, the Banach space of the bounded linear operators on $L^2(\mathbb{R}^d)$, in other words, that the perpetual reduction of the wave function forces the particle to move within the region $\Omega$ as if its boundary was Dirichlet, i.e. hard wall. This is, of course, the expected conclusion indicated, e.g. in \cite{FP08, FPSS01}, however, only Theorem~\ref{th:main} allows one to state such a claim with the proper rigor.


\subsection*{Acknowledgments}

The research was supported by the Czech Science Foundation within the project 17-01706S and in part by Grant-in Aid for Scientific Research 16K05230, Japan Society for the Promotion of Science, and by the EU project CZ.02.1.01/0.0/0.0/16\textunderscore 019/0000778. The authors are grateful to Tsuyoshi Ando for valuable discussions, to Hiroshi Tamura, Valentin Zagrebnov, and late Hagen Neidhardt for a number of useful comments, and to Hideo Tamura for his unceasing encouragement.


\end{document}